\documentclass[twocolumn,superscriptaddress,aps]{revtex4-2}

\usepackage[utf8]{inputenc}
\usepackage{url}
\usepackage{amssymb,amsmath,amsthm}
\usepackage{bm}
\usepackage{graphicx}
\usepackage{enumerate}
\usepackage{microtype}
\usepackage{float}
\usepackage{color}
\usepackage{hyperref}
\usepackage{multirow}
\usepackage{array}
\newcolumntype{C}[1]{>{\centering\let\newline\\\arraybackslash\hspace{0pt}}m{#1}}
\graphicspath{{figures/}}

\usepackage[ruled,linesnumbered,vlined]{algorithm2e}

\newtheorem{defn}{\noindent $\mathbf{Definition}$}[section]

\newtheorem{theorem}[defn]{$\mathbf{Theorem}$}
\newtheorem{cor}[defn]{$\mathbf{Corollary}$}
\newtheorem{remark}{Remark}

\begin{document}

\title{A unified geometric design framework for kirigami structures}
\author{Qinghai Jiang}
\affiliation{Department of Mathematics, The Chinese University of Hong Kong}
\author{Gary P. T. Choi}
\thanks{To whom correspondence may be addressed. Email: ptchoi@cuhk.edu.hk.}
\affiliation{Department of Mathematics, The Chinese University of Hong Kong}


\begin{abstract}
In recent years, kirigami metamaterials have been widely studied and applied in science and engineering. While various two- and three-dimensional kirigami design methods have been developed, most of them are only applicable to a limited class of kirigami structures. In this work, we develop a unified framework for kirigami design that encompasses a wide range of 2D-to-2D, 2D-to-3D, and 3D-to-3D shape-morphing effects, as well as additional geometric and physical properties such as compact reconfigurability and rigid deployability. In particular, by reformulating the design task as a length-based constrained optimization problem and solving it simultaneously for multiple target states of the kirigami structure, our unified design framework enables greater design flexibility and stronger theoretical support. Experimental results with a wide range of shape-morphing effects are presented to demonstrate the effectiveness of our framework. We further present a rigorous theoretical analysis of several key aspects of kirigami design, covering inertia transposition, aspect-ratio law, and angle defects, thereby elucidating important design rules and limitations. Altogether, our work paves a new way for the design of shape-morphing mechanical metamaterials.

\end{abstract}

\maketitle

\section{Introduction}

Kirigami, the traditional art of paper cutting, has been widely studied and applied in science and engineering for the design of flexible electronics~\cite{jiang2022flexible}, robotics~\cite{yang2021grasping}, mesosurfaces~\cite{cheng2023programming}, and parachutes~\cite{lamoureux2025kirigami}. Over the past several years, there has been an increasing interest in the computational design of shape-morphing mechanical metamaterials~\cite{bertoldi2017flexible,barchiesi2019mechanical,xia2022responsive,li2023auxetic}. For instance, various computational methods have been developed for designing auxetic surface structures~\cite{konakovic2016beyond,konakovic2018rapid,chen2021bistable}, surface-based inflatables~\cite{panetta2021computational} and other shape-morphing metamaterials~\cite{nojoomi20212d,liu2021wallpaper,jiang2022shape,liu2022quasicrystal,zhai2021mechanical,jin2024engineering,toyonaga2026collapsible}. In particular, inverse kirigami design methods have been developed for transforming a two-dimensional (2D) perforated sheet into another prescribed 2D or three-dimensional (3D) shape~\cite{choi2019programming,choi2021compact,dudte2023additive,qiao2025inverse,segall2026uniformly}. Some recent works have also studied the design of kirigami structures for other surface topologies~\cite{dang2021theorem,dang2022theorem}.

Besides 2D shape-morphing kirigami, there has also been a large number of studies on 3D shape-morphing mechanical metamaterials in recent years, including the design of morphable spherical shells that can undergo a size change~\cite{shim2012buckling}, 3D multistable shape-reconfigurable architected materials~\cite{haghpanah2016multistable}, 3D reconfigurable prismatic architected materials~\cite{overvelde2017rational}, 3D tiled auxetic metamaterials with high resilience and mechanical hysteresis~\cite{li20243d}, and the fast assembly of curved structures using auxetic linkages~\cite{zaman2025one}. For kirigami-inspired 3D metamaterials, more recent works have considered designs with programmable thermal expansion~\cite{gu2024kirigami}, multiple snapping morphologies~\cite{hong2025reprogrammable}, and stretching and deformation patterns~\cite{giomi2026stretching}. The designs of 3D shape-morphing ori-kiri assemblages involving panel rotation around creases~\cite{dang2025shape} and reconfigurable hinged kirigami tessellations~\cite{segall2025reconfigurable} have also been recently explored. However, most existing approaches rely on highly case-specific formulations, making it difficult to apply and generalize these frameworks to other scenarios. 

\begin{figure}[t!]
    \centering
    \includegraphics[width=\linewidth]{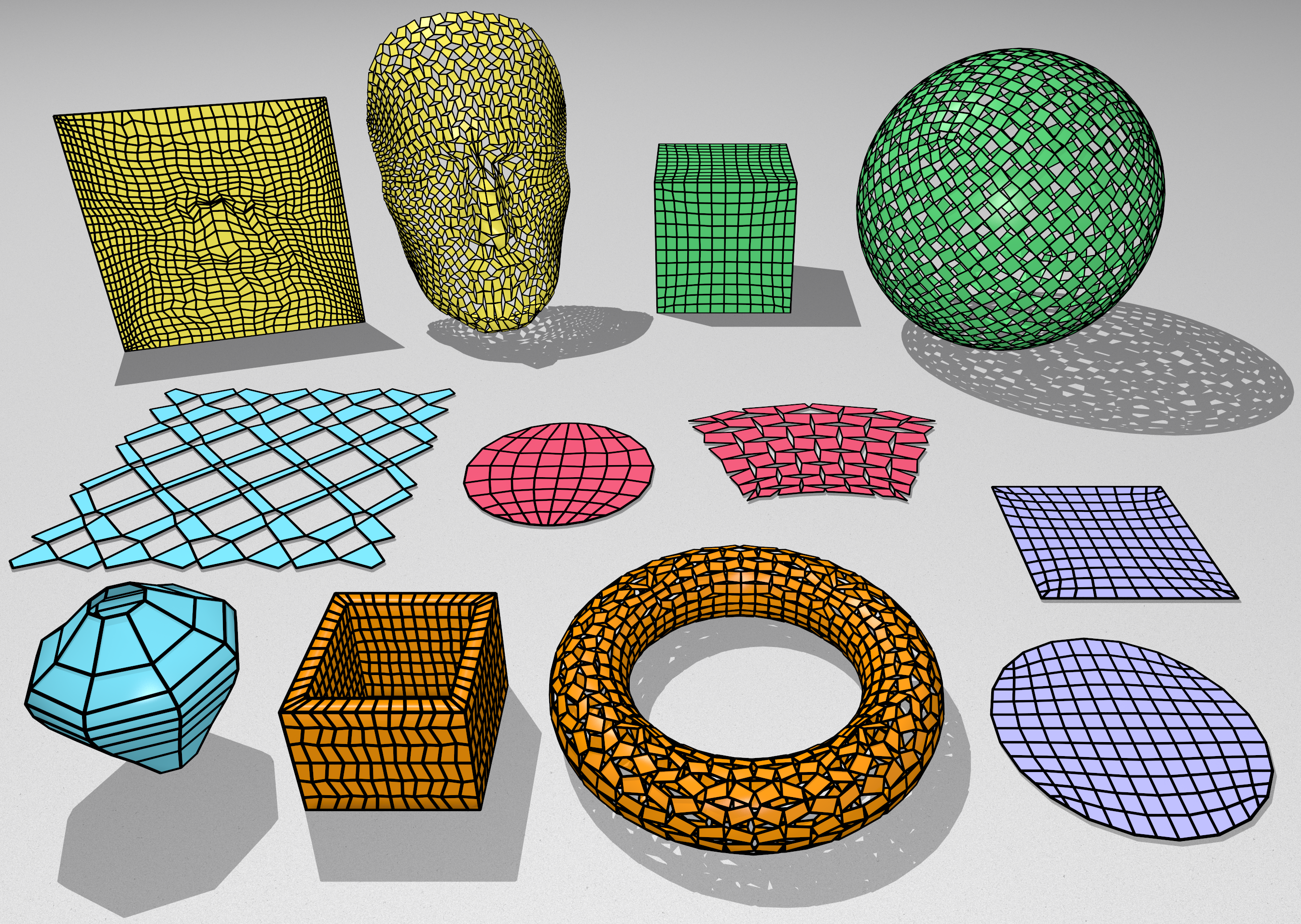}
    \caption{\textbf{A gallery of two- and three-dimensional shape-morphing kirigami structures created by our proposed unified design framework.} Each color represents a kirigami structure. For each structure, we show two configurations with the desired shape morphing effects: (Red) from a 2D compact circle to a 2D deployed rainbow shape;  (Purple) from a 2D compact square to 2D compact circle; (Yellow) from a 2D compact rectangle to a 3D human face shape; (Blue) from a 2D deployed rectangle to a 3D compact vase; (Green) from a 3D compact cube to a 3D deployed sphere; (Orange) from a 3D compact square ring to a 3D deployed torus.}
    \label{fig:F1}
\end{figure}

To address this challenge, here we establish a unified geometric design framework for 2D and 3D kirigami structures with a wide range of shape-morphing effects (see Fig.~\ref{fig:F1} for examples). Specifically, we first introduce a new length-based constrained optimization framework for designing deployable kirigami structures that can achieve prescribed target shapes in multiple states in 2D and 3D. Additional constraints can also be imposed to control their other properties, such as compact reconfigurability and rigid deployability. We also examine the shape-shifting properties of the structures designed by our framework through a wide range of numerical experiments followed by a rigorous theoretical analysis on the inertia transposition, aspect-ratio law, and angle defects, thereby providing a better understanding of the capabilities and limitations of kirigami across different design problems.

\section{Methods}
To simplify our discussion, here we focus on kirigami structures consisting of quadrilateral tiles with the standard rotating-squares topology (see Fig.~\ref{fig:tess_terminology} for an illustration). Also, we first introduce our formulation for the 2D-to-2D design problems, and then extend it for 3D design in the later subsections.

\subsection{Basic formulation and core constraints}

\subsubsection{Optimizing both the contracted and deployed states}
Suppose our goal is to design a kirigami structure that satisfies certain desired properties, such as the approximation of a given target shape at a contracted and/or deployed state (shape matching), the existence of multiple contracted configurations (compact reconfigurability), and the ability to morph from one configuration into another without tile deformation (rigid deployability). Our design strategy is to formulate a constrained optimization problem with the variables being the vertex coordinates in \emph{both the contracted and deployed configurations}. More specifically, for an $N \times N$ quad pattern, there are $(N+1)^2$ vertices in the compact pattern and $2N^2+2N$ vertices in the deployed pattern. If we consider both the compact and deployed configurations in 2D, the total number of variables is $6N^2+8N+2$. Here, it is noteworthy that we have identified all vertices sharing the same position in either the contracted or deployed state as a single vertex. This approach allows us to greatly reduce the geometric constraints involved in the optimization framework.

\begin{figure}[t]
    \centering
    \includegraphics[width=\linewidth]{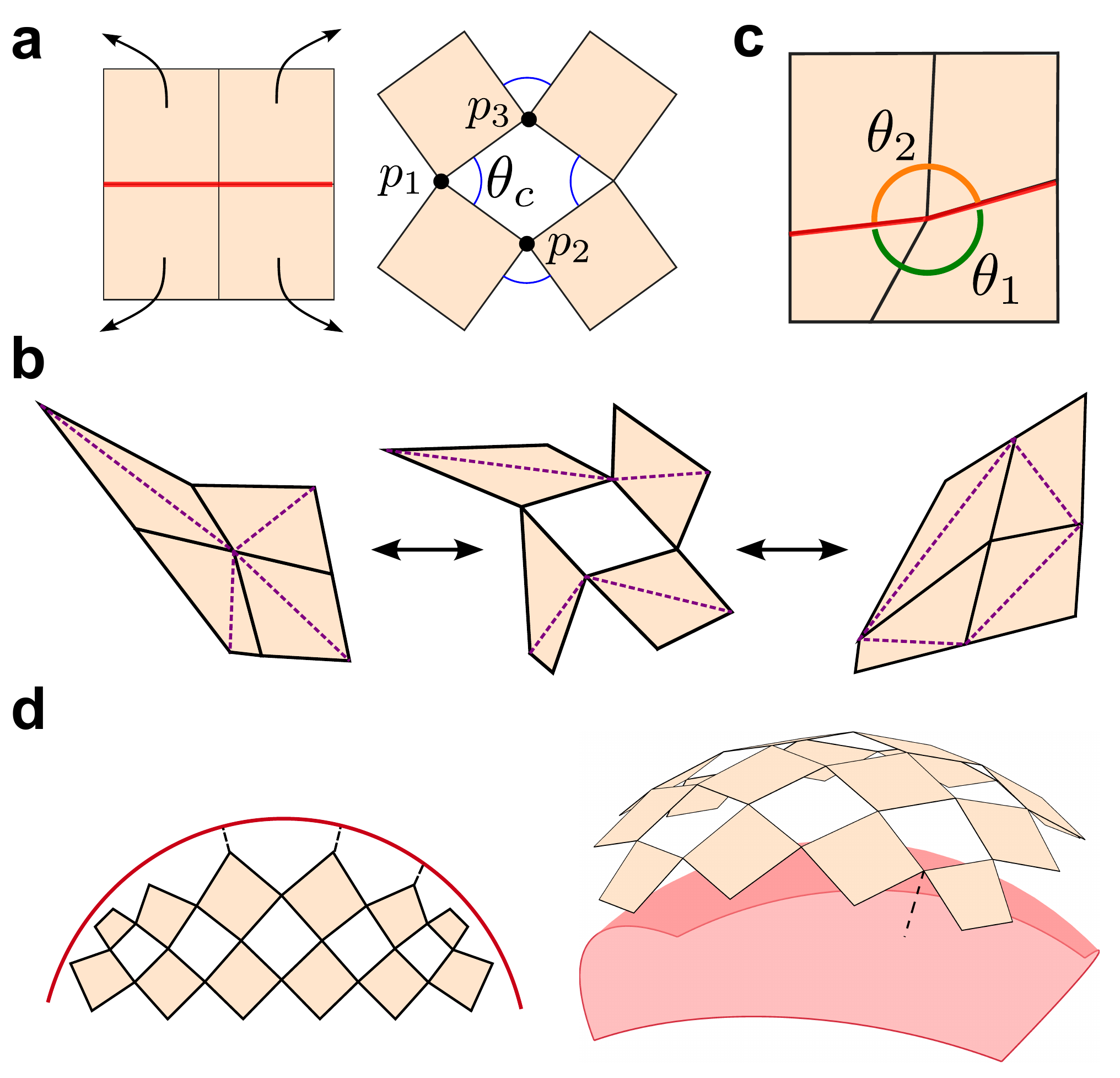}
    \caption{\textbf{An illustration of the key components in our unified kirigami design framework.} \textbf{a},~We denote the red line in the compact pattern as a ``slit'', along which a cut is made, leading to a void in the deployed pattern. The blue angles in the deployed pattern are called the ``cutting angles''. \textbf{b},~In our unified design framework, we can formulate three kinds of optimization processes by selecting two of the three states, namely, compact-to-deployed (first and second), compact-to-compact (first and third, i.e., \emph{compact reconfigurable} kirigami), and deployed-to-compact (second and third), offering great flexibility in kirigami design. \textbf{c}, Generally, the slit in a kirigami structure is not a straight line, i.e., $\theta_1 \neq \pi$ and $\theta_2 \neq \pi$. One feature of our unified design framework is that we can further achieve \emph{rigid deployability} by imposing equal edge length constraints on all edges of each slit and additional inequality constraints on $\theta_1$ and $\theta_2$, which are preferable to directly enforcing equality constraints. \textbf{d},~Our unified design framework also provides great flexibility in designing kirigami structures that approximate target 2D or 3D shapes at different states. Moreover, depending on the required shape-matching accuracy, the shape-matching condition can be encoded either as a constraint or as an objective function.}
    \label{fig:tess_terminology}
\end{figure}

\subsubsection{Core geometric constraints}
Now, since we use the vertex coordinates in both the contracted and deployed configurations as variables, we need constraints to ensure the correspondence between the tiles. Specifically, each quadrilateral tile in the compact configuration should be isometric to its corresponding tile in the deployed configuration. Instead of expressing this relationship using an isometric transformation, we can simply express it in terms of a set of \emph{edge length constraints $e_i(x)$}:
\begin{equation}
    \label{edge-length-constraint}
    \|p_i - p_j\|^2 - \|\tilde{p}_i - \tilde{p}_j\|^2 = 0,
\end{equation}
where $p_i, p_j$ ($i, j = 1, 2,3, 4$ and $i \neq j$) are points of a tile in the compact pattern and $\tilde{p}_i, \tilde{p}_j$ are points of the corresponding tile in the deployed pattern. Notice that there are in total six distinct edge length constraints for each pair of quadrilaterals, including four constraints for the edges and two for the diagonals. 

Additionally, we need some inequality constraints $a_j(x)\geq 0$ related to angles to avoid some unsatisfactory condition like overlap in the deployed pattern. For instance, for the cutting angles $\theta_c$ formed by points $(p_1,p_2,p_3)$ shown in Fig.~\ref{fig:tess_terminology}\textbf{a}, we should have
\begin{equation}
    a_j(x) = (p_2-p_1)\times (p_3-p_1) \geq 0.
\end{equation}

We remark that in prior kirigami design formulations~\cite{choi2019programming,choi2021compact}, the optimization is done based on the vertices in the deployed space only, and this consistency between tiles in the contracted and deployed states is enforced via a set of contractibility constraints, which involve both the edge length equality constraints and the angle sum equality constraints. Enforcing the angle-sum equality constraints requires expressing the tile angles in terms of the vertex coordinates, which is numerically challenging. By contrast, our formulation here involves only the edge length equality constraints, and their gradients can be easily expressed. While some angle constraints remain, they are all \emph{inequality constraints} and do not significantly affect our optimization process when using the interior-point method (see the next subsection for more details). Therefore, our new approach not only simplifies the constraint formulation but also improves computational performance.

\subsubsection{Optimization formulation}

The above edge length equality constraints and angle inequality constraints form the core of the constrained optimization problem. Other desired geometric or physical properties can be further enforced via the inclusion of other constraints, which will be detailed in the subsequent subsections. The general procedure of our optimization framework is as follows:
\begin{enumerate}
    \item We first construct an initial guess for both the contracted and deployed patterns. For instance, as described in~\cite{choi2019programming}, a conformal/quasi-conformal map can be applied to obtain an initial deployed pattern that matches the prescribed target shape. In our formulation here, the initial guesses for each of the contracted and deployed patterns can be set independently and do not need to be generated using the same approach. 
    
    \item We then solve the optimization problem using the interior-point method (IPM), where the objective function $f_0(x)$ can be chosen by the user, and the constraints $e_i(x) = 0, a_j(x) \geq 0$ encode all the desired properties.
\end{enumerate}

Note that the general setting of IPM is 
\begin{equation}
\min f_0(x) \quad \text{subject to} \quad e_i(x)=0,  a_j(x)\geq 0.
\end{equation}
In each step, we need to solve a linear system with a sparse matrix to determine the direction of the next step, i.e., the known Karush--Kuhn--Tucker (KKT) system
\begin{equation}
\label{eq:KKT}
    \begin{bmatrix}
H & J_e^\top & J_a^{\top} \\
J_e & 0    &  0 \\
J_a & 0 & 0
\end{bmatrix}
\begin{bmatrix}\Delta x\\ \Delta \lambda \\ \Delta \mu\end{bmatrix}
=
-\begin{bmatrix}
r_d\\ r_e \\ r_a
\end{bmatrix},
\end{equation}
where $H=\nabla^2 f_0(x)+\sum_i \lambda_i \nabla^2 e_i(x) + \sum_j \mu_j \nabla^{2} a_j(x)$, $J_e, J_a$ are the Jacobian of $e_i(x)$ and \emph{active} inequality $a_i(x)$ respectively. Since the sparse matrix on the left side of the KKT system~\eqref{eq:KKT} will be ill-conditioned if there are equivalent equality constraints $e_i(x)$, which causes instability of the solver, we need to form the edge length constraints carefully. For example, if we simply enforce six equality constraints for two quadrilaterals to be identical, the matrix would degenerate in the domain $\{x: a_j(x)\geq 0, \forall j\}$ due to the fact that the second diagonal edge pair has already held when four edge pairs and one diagonal pair are equal in length \emph{and all inner angles are less than $\pi$}. Therefore, instead of using all 6 edge length constraints in Eq.~\eqref{edge-length-constraint}, it suffices to enforce 5 edge length equalities together with angle \emph{inequality constraints} such as $0<\theta<\pi$ for all inner angles $\theta$ of a quadrilateral in the planar case with existing negative space. See more details in SI Appendix~\ref{appendix:optimization}.

\subsection{Compact reconfigurability}

As introduced in~\cite{choi2021compact}, one can design kirigami structures with \emph{compact reconfigurability}, i.e., the ability to morph from a compact state into another compact state (see Fig.~\ref{fig:tess_terminology}\textbf{b}), by introducing a dual set of geometric constraints. Together with the ordinary geometric constraints, this enforces that all edge lengths in each negative space are equal. It can also be observed in Fig.~\ref{fig:tess_terminology}\textbf{b} that the intermediate state of a compact reconfigurable pattern has a rhombus as negative space.

Note that when we formulate the optimization problem based on both compact patterns, the number of vertices is reduced to $2\times (N+1)^2$ vertices in total (including both the compact state and the reconfigured compact state). However, notice that the total number of edge pairs is $5N^2$ if we still enforce 5 edge pairs per quadrilateral, which exceeds the number of variables when $N\geq 9$, not to mention the constraints about boundary shape like the rectangle compact pattern. This is another example of \emph{geometry constraint coupling}, similar to the second diagonal pair introduced earlier. Therefore, we need to remove some redundant edge pairs in the current 5 edge pairs for each quadrilateral.

\subsection{Rigid deployability}
Another optional property that can be enforced by our design framework is \emph{rigid deployability}, which ensures that the kirigami structures can be deployed without any tile deformations throughout the deployment process. As described in~\cite{choi2021compact}, if one enforces that all edges in each negative space are equal in length (i.e., forming a rhombus negative space) and that all slits in the contracted state form a straight line, then the resulting kirigami structure is rigidly deployable. 

In our framework, we can enforce rigid deployability via the above-mentioned equal edge-length and straight slit constraints. In particular, we enforce the straight slit condition by introducing additional inequality constraints $a_j(x)\geq 0$ instead of equality constraints. As shown in Fig.~\ref{fig:tess_terminology}\textbf{c}, for the three points of each slit we impose the two angle inequalities $\theta_1 \leq \pi$ and $\theta_2 \leq \pi$. Note that the interior point method is not sensitive to inequality constraints, and the number of inequality constraints can exceed the number of variables. Therefore, achieving the desired geometric shape by complementing these additional angle inequality constraints is preferable to directly enforcing the collinearity of the slit as a hard equality constraint.

We remark that a straight slit is a \emph{necessary} condition for rigid deployability rather than a complete characterization: it fixes the local arrangement of the tiles around each slit but does not, by itself, guarantee the existence of a global deployment path in which all tiles remain rigid unless the rhombus negative space constraint is also enforced. More generally, our design framework can also be utilized for designing rigidly deployable kirigami structures with non-rhombus parallelogram negative spaces as in~\cite{dudte2023additive} in which every negative space also forms a straight slit in the contracted state while the tile vertices do not meet at a common point at the slit center. More examples are presented in SI Appendix~\ref{appendix:additionalresults}.

\subsection{Shape matching}
To control the overall shape of the designed kirigami pattern in either the contracted or deployed configuration (or both), we can incorporate additional constraints or use an appropriate objective function in the optimization framework (Fig.~\ref{fig:tess_terminology}\textbf{d}).

First, note that we can easily enforce that the compact or deployed pattern admits a rectangular boundary by incorporating some linear equality constraints. For instance, to ensure that two boundary points $(x_i,y_i), (x_j,y_j)$ are located on the same horizontal line, we enforce 
\begin{equation}
    y_i - y_j = 0.
\end{equation}
Similarly, to ensure that they lie on the same vertical line, we can have
\begin{equation}
    x_i - x_j = 0.
\end{equation}
Combining these constraints yields the desired rectangular boundary shape.

We can further enforce two boundary points $(x_i,y_i), (x_j,y_j)$ to lie on a straight edge $ax + by = 0$ by enforcing that 
\begin{equation}
    a(x_j-x_i)+b(y_j-y_i) = 0.
\end{equation}

More generally, to match a given target shape $\mathcal{C}$, we can add the equality constraints
\begin{equation} \label{eqt:shapematching}
    d(p_i, \mathcal{C}) = 0,
\end{equation}
where $d(\cdot, \cdot)$ denotes the Euclidean distance, for all relevant points $p_i$ such that they are located on the target shape.

In case we do not require exact matching of the target shape, we can incorporate the shape matching condition as an objective function in the optimization framework:
\begin{equation}
\label{objective-optimization}
f_0(x) = \sum_{i} \|p_i - q_i\|^2,
\end{equation}
where $q_i$ is the point on the target shape that is closest to $p_i$. Here, we can assign different boundaries to the compact, deployed, or both states by integrating them into the objective~\eqref{objective-optimization} to achieve our desired effects.

\subsection{Extension to 3D design}

Note that our proposed framework solely utilizes the coordinates of the vertices in the contracted and deployed configurations. While the above constraints and optimization formulations were introduced based on a 2D setup, they can be naturally extended to 3D design.

\subsubsection{2D-to-3D design framework}
We first consider the problem of designing a 2D kirigami structure that can be deployed to approximate a target shape in 3D. Similar to the 2D-to-2D framework, we need the edge length constraints in 3D. However, unlike the 2D case, enforcing the edge length constraints for four edges and one diagonal only is not sufficient for ensuring the isometry between the tiles in the 2D-to-3D design problem, as the tiles may fold along the other diagonal without violating the above constraints. To ensure isometry without appending an equality constraint about the second diagonal length, we utilize the approach in~\cite{choi2019programming} and consider the volume of the tetrahedron formed by the quadrilateral vertices. By setting it to zero, we ensure that the quadrilateral in the deployed 3D space is identical to the one lying on the plane.

The other constraints can also be extended naturally. For instance, the shape matching constraint~\eqref{eqt:shapematching} remains applicable, with $\mathcal{C}$ denoting a target surface in 3D for deployed configurations. 

Also, similar to the 2D-to-2D design formulation, we can easily design 2D-to-3D compact reconfigurable kirigami structures by considering the case where the cutting angles reach $\pi$ in the initialization of the ``deployed'' pattern, so that the negative spaces vanish and the pattern becomes a second contracted state. In this case, we can identify the vertices sharing the same 3D coordinates in the second contracted state and enforce the edge length constraints for certain edge pairs.

\begin{figure*}[t!]
    \centering
    \includegraphics[width=\linewidth]{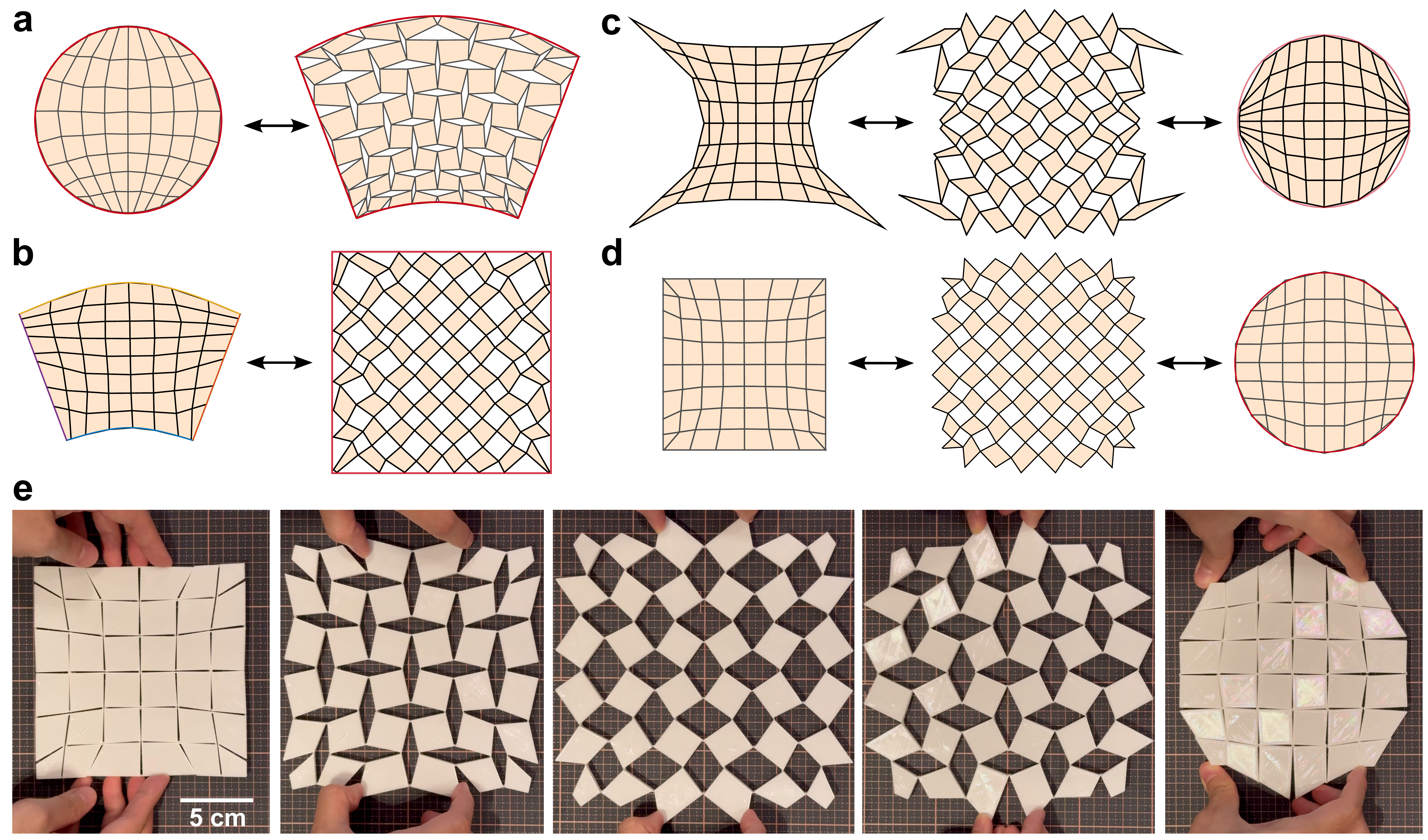}
    \caption{\textbf{Examples of 2D-to-2D kirigami structures designed by our framework.} \textbf{a}, A kirigami structure that morphs from a compact circle to a deployed state approximating a rainbow shape. \textbf{b}, A rigidly deployable kirigami structure with a rainbow-to-rectangle shape change from compact to deployed states. \textbf{c}--\textbf{d}, Rigidly deployable and compact reconfigurable patterns with a circle as target shape in the second contracted state. For the compact pattern, one can set (\textbf{c}) a free boundary shape condition or (\textbf{d}) a square boundary condition and obtain different resulting designs. For each example, the left and right panels show the optimization results produced by our design framework, and the middle panel shows an intermediate deployed configuration of the pattern simulated by the PyKirigami simulator~\cite{jiang2025pykirigami}. \textbf{e}, A physical model of a rigidly deployable and compact reconfigurable square-to-circle kirigami structure. Scale bar = 5 cm.}
    \label{fig:example_2D-to-2D}
\end{figure*}

\subsubsection{3D-to-3D design framework}
We can further extend our design framework for 3D-to-3D kirigami design. Specifically, we can design kirigami structures that can start with a 3D shape and reach another target 3D shape upon deployment. 

To achieve this, one natural strategy is to exploit the symmetry in the target 3D shape to reduce the 3D-to-3D design problem into one or several 2D-to-3D design problems, followed by a suitable assembly procedure. Note that this requires enforcing constraints for not only matching the target 3D shape but also ensuring the compatibility between different 2D-to-3D patches. In other words, additional constraints on the boundary vertices of the patches should be imposed to ensure they lie at consistent locations along the partition curves. Alternatively, one can directly solve the 3D-to-3D constrained optimization problem by starting with a 3D initial guess for both states and solving for the optimized 3D vertex coordinates. Both approaches can be achieved under our proposed unified design framework.

\begin{figure*}[t]
        \centering
        \includegraphics[width=\linewidth]{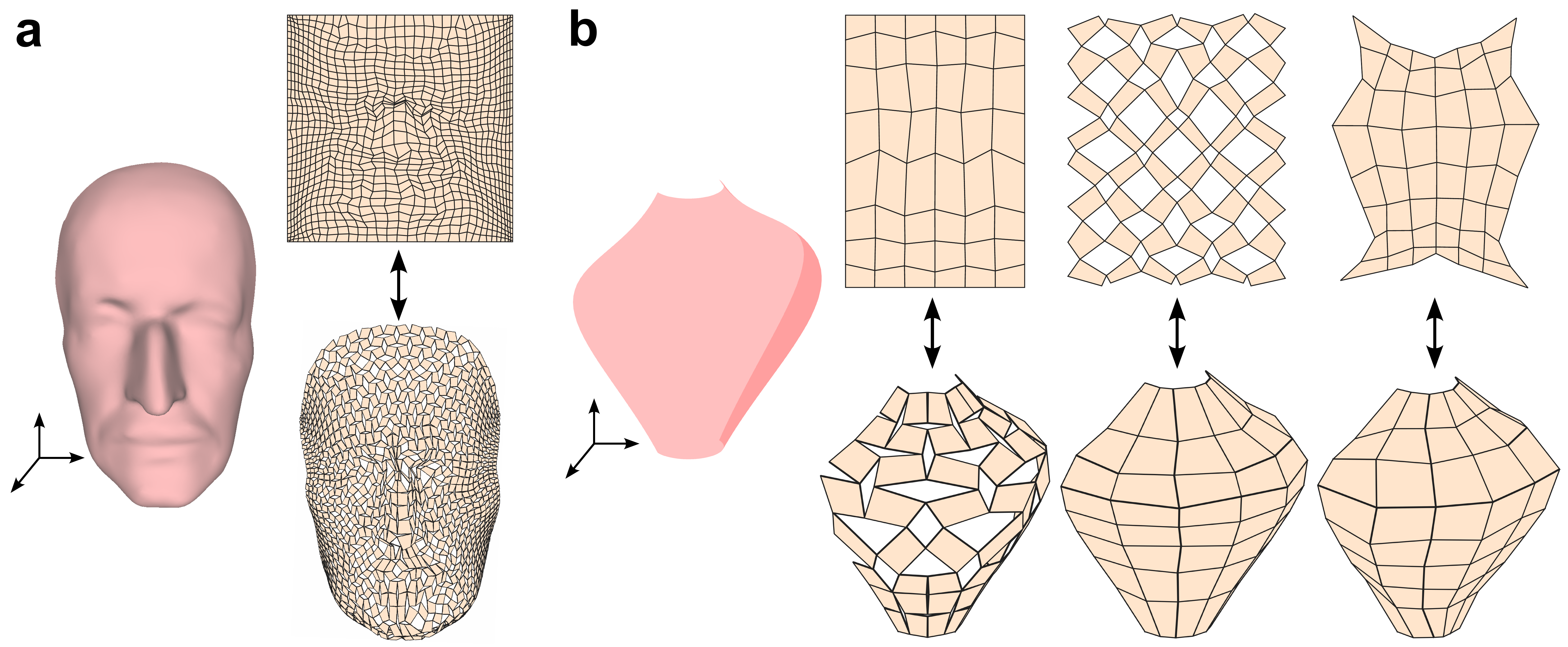}
        \caption{\textbf{Examples of 2D-to-3D kirigami structures designed by our framework.} \textbf{a}, Given a target 3D Max Planck surface (left), we can design a kirigami structure that morphs from a 2D compact rectangle (top right) to a 3D deployed configuration matching the target surface (bottom right). \textbf{b}, We further consider matching a 3D half-vase surface with different conditions. First column: A structure that morphs from a 2D compact rectangle to a 3D deployed configuration matching the half-vase geometry. Second column: A structure that morphs from a 2D deployed rectangle to a 3D compact configuration matching the half-vase. Third column: A structure that morphs from a 2D contracted configuration to a 3D second contracted configuration approximating the half-vase shape.}
        \label{fig:examples-2D-to-3D}
\end{figure*}

\subsection{Implementation}

The proposed unified kirigami design framework is implemented in MATLAB. In particular, the constrained optimization problem is solved using the interior-point method with the aid of the open-source software package IPOPT (Interior Point OPTimizer)~\cite{wachter2006implementation}. 

In SI Appendix~\ref{appendix:optimization}, we provide more details of the formulations and mathematical expressions of the objective functions and constraint residuals. The derivations of the Jacobians are also provided. More detailed explanations of the possible degeneracy in KKT, the error analysis for the optimization scheme, and our solution and practical setups are also provided. 

\section{Results} \label{sec:results}

Using our proposed unified design framework, we can easily design shape-morphing kirigami structures. Below, we highlight several representative examples of 2D-to-2D, 2D-to-3D, and 3D-to-3D kirigami structures created by our framework. More examples are provided in SI Appendix~\ref{appendix:additionalresults}.

\subsection{2D-to-2D kirigami structures}

In Fig.~\ref{fig:example_2D-to-2D}\textbf{a}, we consider designing a kirigami structure undergoing a circle-to-rainbow shape change. Specifically, it is desired that the kirigami structure forms a circle in the contracted state and then morphs into a rainbow shape in the deployed state. It can be observed that the optimization result satisfies the requirement perfectly.

Next, as described in our formulation, it is possible to enforce additional constraints in our framework for achieving rigidly deployable kirigami structures. In Fig.~\ref{fig:example_2D-to-2D}\textbf{b}, we show an example of a kirigami structure undergoing a rainbow-to-square shape change from the contracted state to the deployed state. Again, it can be observed that the designed pattern satisfies all the required properties.

We then show examples of rigidly deployable and compact reconfigurable patterns (Fig.~\ref{fig:example_2D-to-2D}\textbf{c}--\textbf{d}). Here, we consider a circle as the target shape at the second contracted state of the pattern. As for the first compact pattern, we can either impose a free-boundary shape condition or a square boundary condition. In both cases, it can be observed that the optimization results produced by our proposed framework satisfy the requirements. In Fig.~\ref{fig:example_2D-to-2D}\textbf{e}, we further show the deployment and reconfiguration process of a physical model produced based on a rigidly deployable, compact reconfigurable square-to-circle kirigami structure designed by our framework. The physical model is fabricated using TPU (Thermoplastic Polyurethane) 3D printing, with a small portion between the corners of adjacent tiles kept as joints. It can be observed that the physical model can successfully undergo the desired square-to-circle transformation, illustrating the feasibility of our design.

\subsection{2D-to-3D kirigami structures}

Next, we apply our unified design framework to create kirigami structures with prescribed 2D-to-3D shape-morphing effects. In Fig.~\ref{fig:examples-2D-to-3D}\textbf{a}, we consider a 3D human face as the target shape to be approximated by the kirigami structure. We also enforce that the contracted 2D shape is a rectangle. Under our optimization framework, we obtain a kirigami structure that achieves the desired effects. It is noteworthy that the tile shapes are not uniform. In particular, the optimized tiles that approximate the nose bridge region are more elongated to conform to the narrow, sharp geometry. This demonstrates the ability of our design framework to handle complex 3D shapes. 

Moreover, it is interesting to note that we can achieve different combinations of 2D/3D and contracted/deployed shapes in different states. To realize these options, we simply need to change the initial cutting angle for the patterns in both states. For instance, as shown in Fig.~\ref{fig:examples-2D-to-3D}\textbf{b}, we can create kirigami structures from a 2D compact rectangle to a 3D deployed shape with negative space approximating a half-vase (achieved by setting initial cutting angles $\theta_1=0, \theta_2\in(0, \pi)$), or from a 2D deployed rectangular shape with negative space to a 3D compact shape approximating the half-vase (achieved by setting $0<\theta_1<\pi$ and $\theta_2 = 0$ or $\pi$). We can even achieve a structure that morphs from a 2D compact shape to a 3D reconfigured compact shape approximating the half-vase (achieved by setting $\theta_1=0, \theta_2 = \pi$).

We remark that while there is great flexibility in controlling the shape change achieved by 2D-to-3D kirigami structures, certain theoretical limitations are imposed by the target shapes and the desired kirigami properties. For instance, one can prove mathematically that it is impossible to construct a kirigami structure that can morph from a compact 2D rectangle to a 3D compact vase with all vertices lying on the target vase geometry. We leave the theoretical analysis to Section~\ref{sec:analysis}.

\subsection{3D-to-3D structures}

Now, we consider the design of 3D-to-3D kirigami structures.

In Fig.~\ref{fig:example_3D-to-3D}\textbf{a}, we show a cube-to-sphere kirigami structure designed by our framework. Here, we can naturally exploit the symmetries in the contracted and deployed shape to simplify our computation. More specifically, instead of directly solving the 3D-to-3D design problem, we can solve the design problem on a small region of the cube and the sphere (see SI Appendix~\ref{appendix:additionalresults} for more details).

Analogously, we can create 3D-to-3D kirigami structures with other topologies, such as a kirigami structure that morphs from a square ring to a torus (Fig.~\ref{fig:example_3D-to-3D}\textbf{b}). Here, note that the underlying shapes in both the contracted and deployed states are mathematically genus-1, which is highly nontrivial. To achieve this design, we again partition the torus into multiple pieces and solve a simpler 2D-to-3D design problem. Specifically, note that a small portion of the square ring can be isometrically unfolded onto a planar layout, and hence one can solve a 2D-to-3D design problem with the 2D contracted state matching the unfolded layout and the 3D deployed state matching a target portion of the torus. Then, we can exploit the pattern symmetry and construct the full kirigami structure that morphs from a square ring to a torus (see SI Appendix~\ref{appendix:additionalresults} for more details).

It is noteworthy that another way to get a 3D-to-3D kirigami structure is to directly consider both the initial and deployed shapes in 3D and solve the optimization over all 3D vertex coordinates in both states. In Fig.~\ref{fig:example_3D-to-3D}\textbf{c}, we show an example that combines the two above examples: A kirigami structure that can approximate both a spherical cap shape and a saddle shape in two different states (see SI Appendix~\ref{appendix:additionalresults} for more details).

\begin{figure}[t]
    \centering
    \includegraphics[width=1\linewidth]{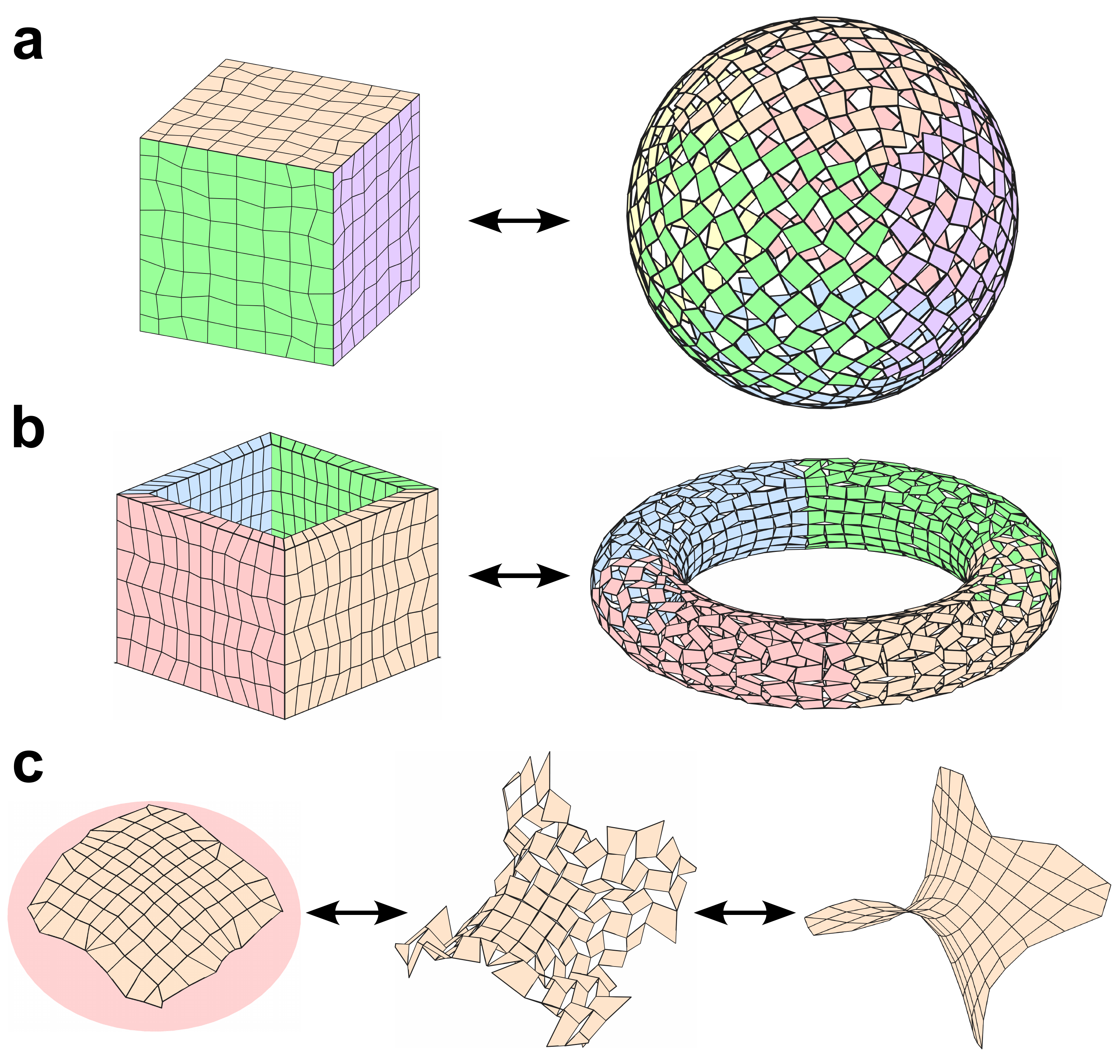}
    \caption{\textbf{Examples of 3D-to-3D kirigami structures designed by our framework.} \textbf{a}, A cube-to-sphere kirigami structure designed by assembling six 2D-to-3D designs. \textbf{b}, A kirigami structure that morphs from a square ring to a torus. \textbf{c}, A 3D kirigami structure that approximates a spherical cap and a saddle in two different compact states. The intermediate deployed configuration is simulated using PyKirigami~\cite{jiang2025pykirigami}.}
    \label{fig:example_3D-to-3D}
\end{figure}

\section{Theoretical analysis on kirigami design} \label{sec:analysis}
After demonstrating the great flexibility of our unified kirigami design framework, we proceed to a theoretical analysis of the kirigami structures it produces. In particular, we focus on compact reconfigurable kirigami patterns so that we can analyze the size and shape changes between the two clearly defined end states of the deployment.  

Our point of departure is a simple observation about the standard rotating-rectangles tessellation shown in SI Fig.~\ref{fig:make_tess}: as the deployment angle varies from $0$ to $\pi$, every tile rotates by exactly $\pm\pi/2$ in a checkerboard pattern, so that the mechanism as a whole behaves like a coordinate \emph{transposition} --- the width and the height of the enclosing rectangle are exchanged.  We ask whether this transposition structure survives in other optimized compact reconfigurable kirigami patterns produced by our framework, whose tiles are no longer congruent rectangles, and what it implies about the shapes that a compact reconfigurable pattern can connect.

We show that the transposition does survive, in a precise averaged sense: the moment aspect ratio of the two compact states is inverted,
\begin{equation}
\label{eq:rin-reciprocal-preview}
r_{\mathrm{in}}(\Omega_1)\;=\;\frac{1}{r_{\mathrm{in}}(\Omega_2)},
\end{equation}
where $\Omega_1,\Omega_2$ are the two planar compact states and $r_{\mathrm{in}}$ is the ratio of the central second moments of area defined below.  We first establish this \emph{inertia transposition} law for any two compact states connected by the compact reconfigurable mechanism, then specialize $\Omega_1$ to a rectangle, which recovers the aspect-ratio law for the compact rectangle used throughout our design, and finally turn to a geometric feasibility analysis and to extensions in 3D and with negative space.

\subsection{Inertia transposition for compact reconfigurable kirigami patterns}

Consider a compact reconfigurable kirigami pattern and its two planar compact states $\Omega_1$ and $\Omega_2$, both comprised of the same $M\times N$ rigid tiles $\tau_{ij}$ with total area $A=\sum_{ij}a_{ij}$.  The two states are connected by the compact reconfigurable mechanism: each tile of $\Omega_2$ is obtained from the corresponding tile of $\Omega_1$ by the checkerboard $\pm\pi/2$ rotation.  The theorem below is general: it does not assume that either state is the ideal mechanism, and both are, in practice, outputs of the optimization.  For a planar region $\Omega$ with centroid $(\bar x,\bar y)$,
\begin{equation}
\bar{x} = \frac{1}{|\Omega|}\int_{\Omega} x\,dA,\qquad
\bar{y} = \frac{1}{|\Omega|}\int_{\Omega} y\,dA,
\end{equation}
its central second moments of area are
\begin{equation}
\label{eq:central-moments}
I_{xx}(\Omega)=\iint_{\Omega}(y-\bar y)^2\,dA,\ \ 
I_{yy}(\Omega)=\iint_{\Omega}(x-\bar x)^2\,dA,
\end{equation}
and we have the dimensionless ratio
\begin{equation}
\label{eq:rin-def}
r_{\mathrm{in}}(\Omega)=\sqrt{\frac{I_{xx}(\Omega)}{I_{yy}(\Omega)}}.
\end{equation}
For a $W\times H$ rectangle, this ratio is $H/W$; for an ellipse with semi-axes $a,b$, it is $b/a$.

The mechanism-level fact underlying the transposition is that compact reconfiguration rotates every tile by $\pm\pi/2$ in a checkerboard.  A $\pi/2$ rotation exchanges the coordinate axes and therefore transposes the central second moments of each tile:
\begin{equation}
\label{eq:tile-transpose}
I_{xx}(\tau^2_{ij})=I_{yy}(\tau^1_{ij}),\qquad
I_{yy}(\tau^2_{ij})=I_{xx}(\tau^1_{ij}),
\end{equation}
where $\tau^k_{ij}$ denotes tile $(i,j)$ in state $k$.  Applying the parallel axis theorem in each state $k$ gives the exact identities
\begin{equation}
\label{eq:global-transpose}
\begin{aligned}
&I_{xx}(\Omega_k)=\sum_{ij}I_{xx}(\tau^k_{ij})+A\,\rho^2_{k,y},\qquad k=1,2, \\
&I_{yy}(\Omega_k)=\sum_{ij}I_{yy}(\tau^k_{ij})+A\,\rho^2_{k,x},\qquad k=1,2,
\end{aligned}
\end{equation}
where $\rho^2_{k,y}$ and $\rho^2_{k,x}$ are the area-weighted centroid variances of state $k$,
\begin{equation}
\label{eq:rho-def}
\begin{aligned}
&\rho^2_{k,y}:=\frac1A\sum_{ij}a_{ij}\bigl(g^{y}_{k,ij}-\bar y_k\bigr)^2,\qquad k=1,2,\\
&\rho^2_{k,x}:=\frac1A\sum_{ij}a_{ij}\bigl(g^{x}_{k,ij}-\bar x_k\bigr)^2,\qquad k=1,2,
\end{aligned}
\end{equation}
with $(g^x_{k,ij},g^y_{k,ij})$ the area centroid of tile $(i,j)$ in state $k$ and $(\bar x_k,\bar y_k)$ the area-weighted centroid of $\Omega_k$.  Combining Eq.~\eqref{eq:global-transpose} with the tile transposition \eqref{eq:tile-transpose} gives the cross-state identities
\begin{equation}
\label{eq:cross-transpose}
\begin{aligned}
&I_{xx}(\Omega_1)-I_{yy}(\Omega_2)=A\,\bigl(\rho^2_{1,y}-\rho^2_{2,x}\bigr),\qquad \\
&I_{yy}(\Omega_1)-I_{xx}(\Omega_2)=A\,\bigl(\rho^2_{1,x}-\rho^2_{2,y}\bigr),
\end{aligned}
\end{equation}
so that the difference between the moments of the two states is carried entirely by the centroid variances.

A useful exact consequence of Eq.~\eqref{eq:global-transpose} holds at every finite resolution.  With the \emph{transposition ratios}
\begin{equation}
\label{eq:T-def}
T_y:=\frac{I_{xx}(\Omega_1)}{I_{yy}(\Omega_2)},\qquad
T_x:=\frac{I_{yy}(\Omega_1)}{I_{xx}(\Omega_2)},
\end{equation}
one has
\begin{equation}
\label{eq:rin-product-exact}
r_{\mathrm{in}}(\Omega_1)\,r_{\mathrm{in}}(\Omega_2)=\sqrt{\frac{T_y}{T_x}}.
\end{equation}
In particular, $r_{\mathrm{in}}(\Omega_1)=1/r_{\mathrm{in}}(\Omega_2)$ holds if and only if $T_y=T_x$.  Under the vanishing-diameter hypothesis below, each ratio $T$ reduces to a ratio of centroid variances, and the equality $T_y=T_x$ becomes a balance condition between the fluctuations of the two states.

\begin{theorem}[Inertia transposition of compact reconfigurable patterns]
\label{thm:inertia-free}
Let $(\mathcal{P}_M)_{M\ge2}$ be a sequence of compact reconfigurable kirigami patterns, each with two planar compact states $\Omega^{(M)}_1$ and $\Omega^{(M)}_2$ of common area $A$, related by the checkerboard $\pm\pi/2$ rotation of every tile.  Assume:
\begin{enumerate}
  \item[(H1)] \textbf{Tile diameter vanishing.} We have
    $\max_{ij}\mathrm{diam}(\tau_{ij})\to0$ as $M\to\infty$.
  \item[(H2)] \textbf{State convergence.} We have
    $\chi_{\Omega^{(M)}_k}\to\chi_{\Omega_k}$ in $L^1(\mathbb{R}^2)$ for bounded Lipschitz domains $\Omega_k$ with $|\Omega_1|=|\Omega_2|=A>0$; in particular the central moments of $\Omega^{(M)}_k$ converge to those of $\Omega_k$.
  \item[(H3)] \textbf{Cross-state balanced fluctuations.} We have
    \begin{equation}
      \rho^2_{1,y}\,\rho^2_{2,y}-\rho^2_{1,x}\,\rho^2_{2,x}\;\longrightarrow\;0
      \qquad(M\to\infty),
    \end{equation}
    equivalently $\rho^2_{1,y}/\rho^2_{2,x}-\rho^2_{1,x}/\rho^2_{2,y}\to0$, with $\rho^2_{k,y},\rho^2_{k,x}$ as in Eq.~\eqref{eq:rho-def}.
\end{enumerate}
Then we have
\begin{equation}
\label{eq:rin-reciprocal}
r_{\mathrm{in}}(\Omega_1)=\frac{1}{r_{\mathrm{in}}(\Omega_2)}.
\end{equation}
\end{theorem}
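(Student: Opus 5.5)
We plan to combine the exact finite-resolution identity \eqref{eq:global-transpose} with a vanishing-diameter estimate on the intra-tile moments, and then pass to the limit using (H2) and (H3). Fix $M$. For each tile $\tau^k_{ij}$ of diameter $d_{ij}$ we have $I_{xx}(\tau^k_{ij})\le a_{ij}d_{ij}^2$, and likewise for $I_{yy}$. Writing $\delta_M:=\max_{ij}\mathrm{diam}(\tau_{ij})$ and summing over tiles gives
\begin{equation*}
0\le\sum_{ij}I_{xx}(\tau^k_{ij})\le A\,\delta_M^2,\qquad 0\le\sum_{ij}I_{yy}(\tau^k_{ij})\le A\,\delta_M^2 .
\end{equation*}
By (H1) these sums tend to $0$. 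Inserting this into \eqref{eq:global-transpose} gives $I_{xx}(\Omega^{(M)}_k)=A\rho^2_{k,y}+o(1)$ and $I_{yy}(\Omega^{(M)}_k)=A\rho^2_{k,x}+o(1)$. This makes precise the remark before the statement that each $T$ reduces to a ratio of centroid variances.

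Next we use (H2). The central moments of $\Omega^{(M)}_k$ converge to those of $\Omega_k$. Since $\Omega_k$ is a bounded Lipschitz domain with positive area, its central moments $I_{xx}(\Omega_k)$ and $I_{yy}(\Omega_k)$ are strictly positive and finite. Hence $A\rho^2_{k,y}\to I_{xx}(\Omega_k)$ and $A\rho^2_{k,x}\to I_{yy}(\Omega_k)$, and all four variances are bounded away from $0$ and $\infty$ for large $M$. The convergence of central moments does require the $L^1$ convergence of indicators together with uniform boundedness of the supports. If needed we would justify the latter from the fixed tile count, the bounded tile diameters and the connectivity of the pattern, or simply take it as the content of the ``in particular'' clause of (H2).

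Finally, (H3) passes to the limit directly:
\begin{equation*}
I_{xx}(\Omega_1)I_{xx}(\Omega_2)-I_{yy}(\Omega_1)I_{yy}(\Omega_2)=\lim_{M\to\infty}A^2\bigl(\rho^2_{1,y}\rho^2_{2,y}-\rho^2_{1,x}\rho^2_{2,x}\bigr)=0.
\end{equation*}
Dividing by the positive quantity $I_{yy}(\Omega_1)I_{yy}(\Omega_2)$ and taking square roots yields $r_{\mathrm{in}}(\Omega_1)\,r_{\mathrm{in}}(\Omega_2)=1$, which is \eqref{eq:rin-reciprocal}. Equivalently, via \eqref{eq:rin-product-exact}, the limits satisfy $T_y=T_x$.

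The main obstacle is bookkeeping rather than depth. We must make sure the nondegeneracy holds, namely that the limiting moments are positive so that the division is legitimate; this follows from $|\Omega_k|=A>0$. We must also check that (H3), stated as a difference of products, is the right form to pass to the limit. It is, because products of convergent bounded sequences converge. The equivalent ratio form in (H3) follows from the same lower bounds on the variances.
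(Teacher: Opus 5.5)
Your proposal is correct and follows essentially the same route as the paper: the bound $I_{xx}(\tau^k_{ij}),\,I_{yy}(\tau^k_{ij})\le a_{ij}d_{ij}^2$ removes the intra-tile sums in Eq.~\eqref{eq:global-transpose} under (H1), and then (H3) together with (H2) forces $r_{\mathrm{in}}(\Omega_1)\,r_{\mathrm{in}}(\Omega_2)=1$. The only difference is cosmetic: you pass the product form of (H3) to the limit directly rather than through the ratio $T_y/T_x$ of Eq.~\eqref{eq:rin-product-exact}, and your explicit check that the limiting moments are positive supplies a nondegeneracy step (variances bounded away from $0$) that the paper leaves implicit.
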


\begin{proof} 
Our proof consists of three steps as follows.

\medskip
\emph{Step 1 (Exact per-tile transposition).}  On each tile the compact reconfiguration is a rigid rotation by $\pm\pi/2$, which transposes the coordinate axes about the tile centroid, so Eq.~\eqref{eq:tile-transpose} holds.  The parallel axis theorem applied in each state yields Eq.~\eqref{eq:global-transpose}, and combining it with Eq.~\eqref{eq:tile-transpose} gives Eq.~\eqref{eq:cross-transpose} exactly.

\medskip
\emph{Step 2 (Intra-tile terms vanish under H1).}  For a rigid tile of diameter $d_{ij}$ and area $a_{ij}$, $I_{xx}(\tau^k_{ij})\le a_{ij}d_{ij}^2$ and $I_{yy}(\tau^k_{ij})\le a_{ij}d_{ij}^2$, and hence the intra-tile sums in Eq.~\eqref{eq:global-transpose} are $O\bigl(A\,(\max_{ij}d_{ij})^2\bigr)=o(A)$ by (H1).  Consequently,
\begin{equation}
\begin{aligned}
    &\frac{I_{xx}(\Omega_k)}{A}=\rho^2_{k,y}+o(1) \  \text{ and } \ \frac{I_{yy}(\Omega_k)}{A}=\rho^2_{k,x}+o(1).
\end{aligned}
\end{equation}

\medskip
\emph{Step 3 (Cross-state balance gives the reciprocal).}  By Eq.~\eqref{eq:rin-product-exact}, we have
\begin{equation}
r_{\mathrm{in}}(\Omega_1)\,r_{\mathrm{in}}(\Omega_2)=\sqrt{\frac{T_y}{T_x}},
\ \ 
\frac{T_y}{T_x}=\frac{\rho^2_{1,y}/\rho^2_{2,x}}{\rho^2_{1,x}/\rho^2_{2,y}}+o(1),
\end{equation}
which tends to $1$ by (H3).  Together with (H2) this gives $r_{\mathrm{in}}(\Omega_1)\,r_{\mathrm{in}}(\Omega_2)\to1$, i.e.,\ Eq.~\eqref{eq:rin-reciprocal}.
\end{proof}

\begin{remark}[Cross-state centroid variances are not transposed]
\label{rem:free-verify}
The transposition \eqref{eq:tile-transpose} acts on the intra-tile moments of each tile and says nothing about the positions of the tile centroids, so it does \emph{not} imply any equality between the centroid variances of the two states: in general $\rho^2_{1,y}\neq\rho^2_{2,x}$ and $\rho^2_{1,x}\neq\rho^2_{2,y}$ (see SI Table~\ref{tab:free-verify}).
\end{remark}

\subsection{The aspect-ratio law for rectangular kirigami patterns}

The general law \eqref{eq:rin-reciprocal} sharpens to a closed-form prediction when one of the compact states of the pattern is a rectangle, the case frequently used throughout our design.  Let the compact rectangle $R_M$ have width $W_C$, height $H_C$, and total area $A=W_CH_C$.  Placing its center at the origin, its central second moments of area follow by direct integration,
\begin{equation}
\label{eq:rect-moments}
\begin{aligned}
&I_{xx}(R_M)=\int_{-H_C/2}^{H_C/2}\!\int_{-W_C/2}^{W_C/2} y^2\,dx\,dy
=\frac{W_CH_C^3}{12},\qquad \\
&I_{yy}(R_M)=\int_{-H_C/2}^{H_C/2}\!\int_{-W_C/2}^{W_C/2} x^2\,dx\,dy
=\frac{H_CW_C^3}{12},
\end{aligned}
\end{equation}
so that, by the definition \eqref{eq:rin-def} of the inertia ratio, the moment aspect of the rectangle is the reciprocal of its geometric aspect ratio,
\begin{equation}
\label{eq:rect-rin}
r_{\mathrm{in}}(R_M)=\sqrt{\frac{I_{xx}(R_M)}{I_{yy}(R_M)}}=\frac{H_C}{W_C}.
\end{equation}

Substituting Eq.~\eqref{eq:rect-rin} into the reciprocal law \eqref{eq:rin-reciprocal} of Theorem~\ref{thm:inertia-free} therefore yields the following corollary. 

\begin{cor}[Rectangle aspect ratio]
\label{cor:aspect}
Let $(\mathcal{P}_M)_{M\ge2}$ be a sequence of compact reconfigurable kirigami patterns as in Theorem~\ref{thm:inertia-free}, with the first compact state $\Omega^{(M)}_1$ an $M\times M$ rectangle $R_M$ of width $W_C$, height $H_C$, and total area $A=W_CH_C$, and with the second state $\Omega^{(M)}_2$ converging to the target $\Omega$ in the sense of (H2).  Under the hypotheses of Theorem~\ref{thm:inertia-free}, in particular the cross-state balance (H3), we have:
\begin{equation}
\label{eq:inertia-law}
\frac{W_C}{H_C}\;\longrightarrow\;r_{\mathrm{in}}(\Omega)
=\sqrt{\frac{I_{xx}(\Omega)}{I_{yy}(\Omega)}}.
\end{equation}
Equivalently, the mechanism inverts the moment aspect of the compact rectangle: $r_{\mathrm{in}}(R_M)=H_C/W_C\to 1/r_{\mathrm{in}}(\Omega)$.
\end{cor}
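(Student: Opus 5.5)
The corollary follows by specializing Theorem~\ref{thm:inertia-free} to the sequence in which the first compact state is the rectangle $R_M$ and the second is $\Omega^{(M)}_2$. We first check that the hypotheses of the theorem apply to this specialization. (H1) and (H3) are assumed verbatim. For (H2), we take $\Omega_1$ to be the fixed rectangle $R:=[-W_C/2,W_C/2]\times[-H_C/2,H_C/2]$, which is a bounded Lipschitz domain. Since $\Omega^{(M)}_1=R_M$ is that rectangle for every $M$ (up to translation, which does not affect central moments), $\chi_{\Omega^{(M)}_1}\to\chi_R$ holds trivially. The convergence $\chi_{\Omega^{(M)}_2}\to\chi_\Omega$ is assumed, and the area condition $|R|=|\Omega|=A=W_CH_C$ is inherited from the common-area hypothesis. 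If $W_C,H_C$ are allowed to depend on $M$, I would instead pass to a subsequence along which $(W_C,H_C)$ converges to some $(W_*,H_*)$ with $W_*H_*=A$, and take $\Omega_1$ to be the limiting $W_*\times H_*$ rectangle.

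The theorem then gives $r_{\mathrm{in}}(R)\,r_{\mathrm{in}}(\Omega)=1$. More precisely, the proof of the theorem gives $r_{\mathrm{in}}(\Omega^{(M)}_1)\,r_{\mathrm{in}}(\Omega^{(M)}_2)\to1$, by Step~3 combined with Eq.~\eqref{eq:rin-product-exact}. Next we substitute the explicit rectangle moments from Eq.~\eqref{eq:rect-moments}. By Eq.~\eqref{eq:rect-rin}, $r_{\mathrm{in}}(R_M)=H_C/W_C$, so
\begin{equation*}
\frac{H_C}{W_C}\,r_{\mathrm{in}}(\Omega^{(M)}_2)\longrightarrow 1 .
\end{equation*}

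Finally, (H2) gives convergence of the central moments of $\Omega^{(M)}_2$ to those of $\Omega$. Because $\Omega$ has positive area, $I_{yy}(\Omega)>0$ and $I_{xx}(\Omega)>0$, so $r_{\mathrm{in}}$ is continuous at $\Omega$ and $r_{\mathrm{in}}(\Omega^{(M)}_2)\to r_{\mathrm{in}}(\Omega)\in(0,\infty)$. Dividing by this nonzero limit yields $H_C/W_C\to1/r_{\mathrm{in}}(\Omega)$, and taking reciprocals gives Eq.~\eqref{eq:inertia-law}.

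The only step that needs real care is justifying that the moments converge under $L^1$ convergence of characteristic functions. The weights $x^2$ and $y^2$ are unbounded, so $L^1$ convergence alone does not control the moment integrals. I would handle this by noting that all the states lie in a common bounded set: (H1) bounds the tile diameters, and the fixed number of tiles is arranged around a converging domain. On such a bounded set, $L^1$ convergence implies convergence of the zeroth, first and second moments, hence of the centroids and the central moments. Since the theorem already states that (H2) includes moment convergence, I would simply cite it.
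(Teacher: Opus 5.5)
Your proposal is correct and follows the paper's route: substitute $r_{\mathrm{in}}(R_M)=H_C/W_C$ from Eq.~\eqref{eq:rect-rin} into the reciprocal law of Theorem~\ref{thm:inertia-free}. Your extra steps (using the finite-$M$ product convergence from Step~3 and the continuity of $r_{\mathrm{in}}$ at $\Omega$ under (H2)) only make explicit what the paper's one-line proof leaves implicit. The boundedness aside in your last paragraph is unnecessary, and its reasoning is slightly off because the number of tiles is $M^2$, not fixed; as you note, (H2) already postulates moment convergence.
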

\begin{proof}
By Eq.~\eqref{eq:rect-rin}, the moment aspect of the rectangle is $r_{\mathrm{in}}(R_M)=H_C/W_C$.  The reciprocal law \eqref{eq:rin-reciprocal} of Theorem~\ref{thm:inertia-free} therefore gives $r_{\mathrm{in}}(R_M)\to 1/r_{\mathrm{in}}(\Omega)$, i.e.,\ Eq.~\eqref{eq:inertia-law}.
\end{proof}

To verify the above result, in Fig.~\ref{fig:demo_analysis}\textbf{a} we consider seven target shapes and design corresponding kirigami structures with pattern resolution $M \times M$ from $M = 8$ to $M = 24$ (see also SI Fig.~\ref{fig:targets}). It can be observed that the compact aspect ratio $c = W_C/H_C$ converges rapidly for each target shape. In Fig.~\ref{fig:demo_analysis}\textbf{b}, we further plot the numerical limits in \textbf{a} against the target shape inertia ratio $r_{\mathrm{in}} = \sqrt{I_{xx}/I_{yy}}$, from which we can see that they are highly consistent.

\begin{figure*}[t]
    \centering
    \includegraphics[width=\linewidth]{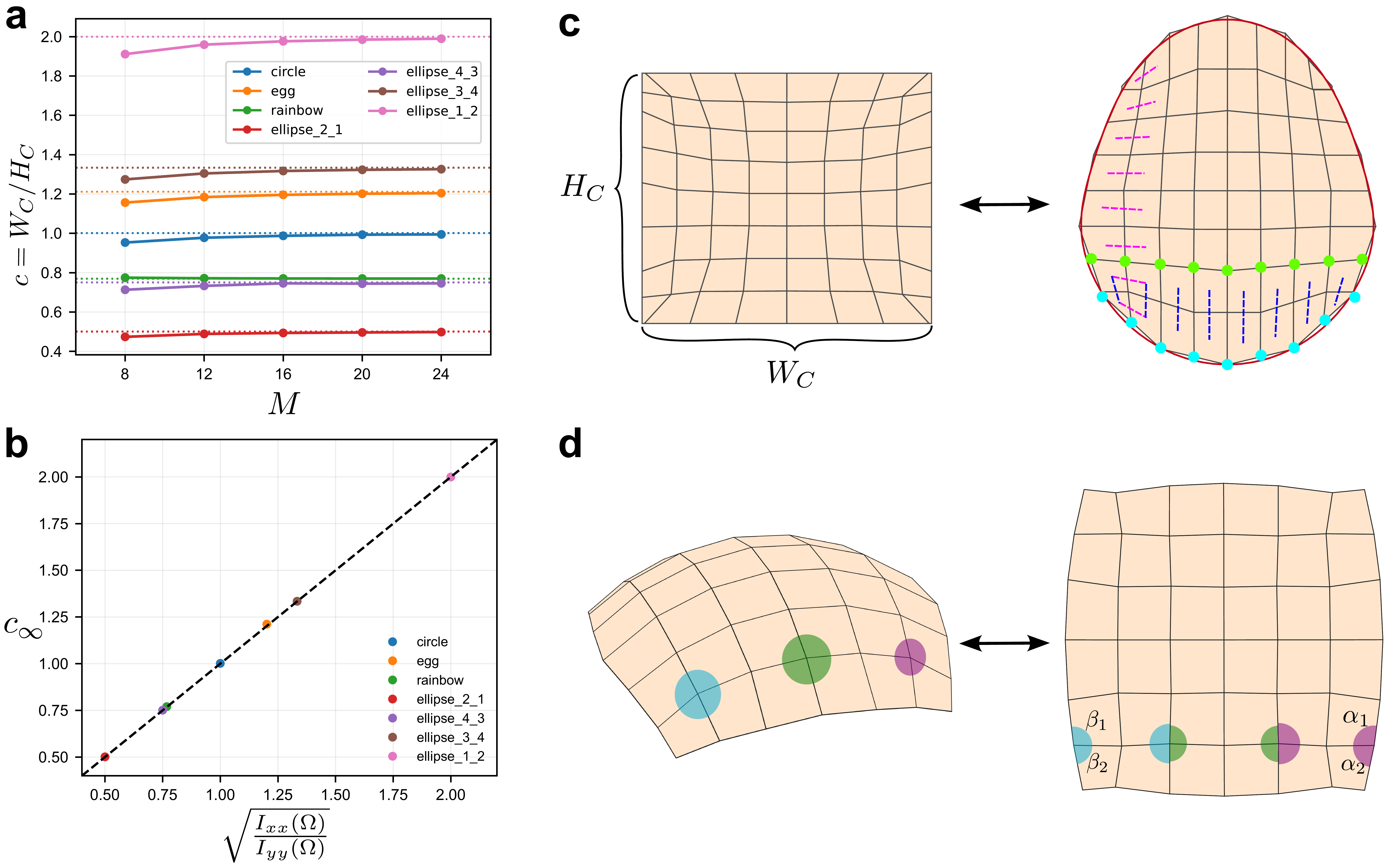}
    \caption{\textbf{Theoretical analysis on kirigami design.} \textbf{a}, The compact aspect ratio $c=W_C/H_C$ against the pattern resolution $M\times M$ for seven target shapes. Dotted lines mark the fitted continuum limits $c_\infty$. \textbf{b}, The fitted $c_\infty$ against the target shape inertia ratio $r_{\mathrm{in}} = \sqrt{I_{xx}/I_{yy}}$. The dashed line is $c_{\infty} = r_{\mathrm{in}}$. The result is highly consistent with our theoretical result presented in Theorem~\ref{thm:inertia-free} and Corollary~\ref{cor:aspect}. \textbf{c}, The compact reconfigurable rectangle-to-egg pattern as an illustration for proving the discrete transposition theorem (Theorem~\ref{thm:transpose}). Dashed segments are the centroid-link vectors joining row~1 to row~2 (green) and column~1 to column~2 (pink): the objects whose vertical and horizontal projections enter the spacings $h^{D,y}_{j}$ and $h^{D,x}_{i}$ of Eq.~\eqref{eq:hD-def}. The green and cyan dots illustrate the first and third layers of deployed vertices used in the proof. Summed over $i$ for fixed $j$, the vertical projections of the blue links reproduce the compact width $W_C$; summed over $j$ for fixed $i$, the horizontal projections of the pink links reproduce the compact height $H_C$. \textbf{d}, Limitations on angle design. The four angles incident to the same point located on a sphere patch have four corresponding angles in a planar compact pattern (highlighted in the same color). One can easily see that $\alpha_1+\alpha_2+\beta_1+\beta_2$ should be strictly less than $2\pi$. This illustrates the impossibility of extending the assembly approach in Fig.~\ref{fig:example_3D-to-3D}\textbf{a} for creating a kirigami structure with a compact-cube-to-compact-sphere shape-morphing effect. }
    \label{fig:demo_analysis}
\end{figure*}

We remark that the above result provides a simple and intuitive design rule for determining a suitable aspect ratio of the compact reconfigurable rectangular kirigami structure based on the target reconfigured compact shape. 

\begin{remark}[Design rule]
With area preservation $W_C H_C=A$, the law~\eqref{eq:inertia-law} fixes the compact rectangle from the target alone:
\begin{equation}
  W_C=\sqrt{A\,r_{\mathrm{in}}(\Omega)},\qquad
  H_C=\sqrt{A\big/r_{\mathrm{in}}(\Omega)}.
\end{equation}
Consequently, a tall target (large $r_{\mathrm{in}}$) requires a wide, compact rectangular initialization, and vice versa.
\end{remark}

Note that the above analysis also applies when the target shape is prescribed for the deployed state (i.e., with open voids).  In this case, the relevant region is the union of the tiles rather than the enclosing boundary polygon, and the second moments are still the sum of the intra-tile moments plus the area-weighted centroid-variance term \eqref{eq:global-transpose}, with the total tile area $A$ as the measure; the voids contribute nothing to the moments and enter only through the centroid positions of the surrounding tiles.  The inertia-transposition identities therefore carry over verbatim, with $\Omega_2$ denoting the tile union, and the aspect-ratio law \eqref{eq:inertia-law} applies with $r_{\mathrm{in}}(\Omega_2)$ computed on the tile union.

\subsection{The discrete transposition of the ideal mechanism}

The aspect-ratio law \eqref{eq:inertia-law} is a statement about global moments: it reads the compact dimensions off the inertia ratio of the target without any reference to the individual tiles.  To connect this global law to the local geometry of the reconfigured compact pattern, and, in particular, to understand how the resolution $M,N$ controls the approximation quality, we record the discrete transposition of the ideal mechanism, where the transposition is exact not only for moments but for the whole lattice.

To each tile $(i,j)$ in the reconfigured compact state, we attach its \emph{vertex-mean center}
\begin{equation}
\label{eq:cD-def}
c^{D}_{i,j}\;:=\;\frac14(p_1+p_2+p_3+p_4),
\end{equation}
the average of its four corners: it is determined by the vertices alone, without reference to the tile shape (in contrast to the area centroid), and it is the natural discrete center of the tile.  The $M\times N$ centers form a lattice whose row and column links we measure through the mean vertical and horizontal projections
\begin{equation}
\begin{aligned}
    &h^{D,y}_{j} = \frac{1}{M}\sum_{i=1}^{M}\left|y(c^{D}_{i,j+1}) - y(c^{D}_{i,j})\right|,\\
    &h^{D,x}_{i} = \frac{1}{N}\sum_{j=1}^{N}\left|x(c^{D}_{i+1,j}) - x(c^{D}_{i,j})\right|,
\end{aligned}
\label{eq:hD-def}
\end{equation}
For the ideal mechanism, the following identity is exact.
\begin{theorem}[Discrete transposition]
\label{thm:transpose}
Assume that (i) the compact state is an exact $M\times N$ rectangle of width $W_C$ and height $H_C$, so that every row of tiles telescopes to $W_C$ and every column to $H_C$; and (ii) the reconfigured compact state is obtained from the initial compact state by rotating every tile by exactly $\pm\pi/2$, with the sign alternating in a checkerboard, with the tiles remaining edge-to-edge.  Then the dominant-component spacings in Eq.~\eqref{eq:hD-def} of the reconfigured compact lattice equal the initial compact dimensions divided by the grid counts:
\begin{equation}
h^{D,y}_{j} = \frac{W_C}{M},\qquad h^{D,x}_{i} = \frac{H_C}{N},
\label{eq:grid-transpose}
\end{equation}
independent of the reconfigured compact target. For $M=N$, this is equivalently $h^{D,y}_{j}+h^{D,x}_{i} = (W_C+H_C)/M$.
\end{theorem}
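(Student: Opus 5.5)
The strategy is to reduce both identities in Eq.~\eqref{eq:grid-transpose} to a telescoping sum of tile edge vectors, using only that each tile moves rigidly and that tiles stay edge-to-edge. We work with the ideal rotating-rectangles mechanism described in the statement. Index tiles $(i,j)$ with $i=1,\dots,M$ along a row and $j=1,\dots,N$ along a column. In the compact state, the tiles of row $j$ share vertical edges and fill the horizontal extent $W_C$. By hypothesis (i), the horizontal edge vectors of the tiles in any row therefore telescope to $W_C$, and likewise the vertical edge vectors of any column telescope to $H_C$.

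The first step is to express the vertex-mean centre $c^D_{i,j}$ of Eq.~\eqref{eq:cD-def} through a pivot vertex. Under hypothesis (ii), tile $(i,j)$ in the reconfigured state is the compact tile rotated by $\pm\pi/2$ about a shared hinge vertex. Since the vertex mean is an affine combination of the vertices, it commutes with rigid motions. Hence $c^D_{i,j}$ equals the hinge vertex plus the rotated offset from that hinge to the compact vertex-mean centre.

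The second step handles the row link between $c^D_{i,j}$ and $c^D_{i,j+1}$. These two tiles are vertically adjacent in the compact state and carry opposite rotation signs. Their connecting vector is a sum of half-edge vectors of the two tiles, which the rotation carries from the vertical direction into the horizontal one, and vice versa. I would write the vertical component of this link as $\pm\frac12$ times a sum of compact horizontal edge lengths of tiles $(i,j)$ and $(i,j+1)$. For an exact compact rectangle these edges coincide, so the vertical projection of the link equals the compact horizontal width $w_i$ of column $i$.

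The third step takes absolute values and sums over $i$. Because every row telescopes, $\sum_i w_i=W_C$. Dividing by $M$ gives $h^{D,y}_j=W_C/M$. The same computation with the roles of $x$ and $y$ swapped gives $h^{D,x}_i=H_C/N$, and the $M=N$ form follows by adding the two identities.

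The main obstacle is the sign bookkeeping in the second step. I must check that the checkerboard alternation makes the vertical projection of each link exactly the full width $w_i$, rather than a difference of half-edges. I must also check that the absolute value in Eq.~\eqref{eq:hD-def} does not mix signs across different $i$. My first attempt would verify this on a single $2\times2$ block of rectangles, both with the explicit rotating-squares coordinates and with non-congruent tiles. Any residual (for instance when the tiles are not rectangles) would have to cancel by telescoping. If the residual does not cancel exactly, I would restrict the claim to rectangular tiles with column-constant widths and row-constant heights, which is what hypothesis (i) implicitly supplies.
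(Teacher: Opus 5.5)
There is a genuine gap in your second step. Your intermediate expression, one half of a sum of horizontal edges of tiles $(i,j)$ and $(i,j+1)$, is nearly right. The failure is the next sentence, ``for an exact compact rectangle these edges coincide,'' which you use to make every link's vertical projection equal a $j$-independent column width $w_i$. Hypothesis (i) does not give this. It constrains only the outline: each compact layer polyline runs from $x=0$ to $x=W_C$. It says nothing about the individual tiles, and in the optimized patterns the theorem is applied to (e.g.\ the rectangle-to-egg pattern) the tiles are non-congruent general quadrilaterals.

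To see the exact link, let $b_{i,j},t_{i,j}$ be the $x$-projections of the bottom and top edges of compact tile $(i,j)$, so that $t_{i,j}=b_{i,j+1}$. Let $L_{i,j},R_{i,j}$ be the upward $y$-projections of the left and right edges of reconfigured tile $(i,j)$. A $+\pi/2$ rotation carries the top edge to the left edge and the bottom edge to the right edge, and a $-\pi/2$ rotation swaps them. So in either case $L_{i,j}+R_{i,j}=b_{i,j}+t_{i,j}$, and
\[
y(c^D_{i,j+1})-y(c^D_{i,j})=\tfrac14\bigl(L_{i,j}+R_{i,j}+L_{i,j+1}+R_{i,j+1}\bigr)=\tfrac14\bigl(b_{i,j}+2t_{i,j}+t_{i,j+1}\bigr).
\]
This is a weighted average of three generally different projections. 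It varies with $j$ and is not a column width. Also, it is $x$-projections, not edge lengths, that telescope: the lengths along a compact layer sum to at least $W_C$, with equality only when the layer is a horizontal segment. Your fallback to rectangular tiles with column-constant widths would therefore prove only a special case, not the theorem as stated.

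The missing step is to sum over $i$ before simplifying. By (i), $\sum_i b_{i,j}=\sum_i t_{i,j}=\sum_i t_{i,j+1}=W_C$. Hence $\sum_i\bigl[y(c^D_{i,j+1})-y(c^D_{i,j})\bigr]=\tfrac14(W_C+2W_C+W_C)=W_C$, which gives $h^{D,y}_j=W_C/M$. This is precisely the paper's proof, where it appears as the cancellation of the middle layer followed by the trapezoidal layer sum $4W_C$.

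The same per-tile identity removes your sign worry. It holds for both rotation senses, so no checkerboard bookkeeping enters. The absolute values in Eq.~\eqref{eq:hD-def} can be dropped whenever the compact horizontal edges have positive $x$-projection, which the paper's signed computation also tacitly assumes. Nor do you need the ``rotation about a shared hinge'' representation. All that is used is that each reconfigured tile is a $\pm\pi/2$-rotated copy with the same layer combinatorics.
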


\begin{proof}
 We prove the row identity; the column statement is identical with the roles of $i$ and $j$ exchanged.  As shown in Fig.~\ref{fig:demo_analysis}\textbf{c}, the reconfigured compact pattern carries a canonical \emph{layer} structure: let layer $j$ ($j=1,\dots,N+1$) denote the $M+1$ reconfigured compact vertices shared by tile rows $j-1$ and $j$ --- layer $1$ is the bottom boundary and layer $N+1$ the top boundary-ordered from left to right, and write $y^{(j)}_k$ for the $y$-coordinate of its $k$-th vertex.  The reconfigured compact tile $(i,j)$ has two corners on layer $j$ (positions $i,i+1$) and two on layer $j+1$ (positions $i,i+1$), so by the definition~\eqref{eq:cD-def} of the vertex-mean center, we have
\begin{equation}
y(c^{D}_{i,j}) = \frac{1}{4}\bigl(y^{(j)}_i + y^{(j)}_{i+1} + y^{(j+1)}_i + y^{(j+1)}_{i+1}\bigr).
\end{equation}
Summing over the row, every layer-$(j+1)$ vertex cancels: it is a top corner of tile $(i,j)$ and a bottom corner of tile $(i,j+1)$ with opposite signs, and only layers $j$ and $j+2$ survive:
\begin{equation}
\begin{aligned}
&\sum_{i=1}^{M}\bigl[y(c^{D}_{i,j+1}) - y(c^{D}_{i,j})\bigr]
=\frac{1}{4}\Bigl[(y^{(j+2)}_1-y^{(j)}_1) \\
&+ 2\sum_{k=2}^{M}\bigl(y^{(j+2)}_k-y^{(j)}_k\bigr) + (y^{(j+2)}_{M+1}-y^{(j)}_{M+1})\Bigr].
\end{aligned}
\label{eq:layer-identity}
\end{equation}
This is the ``second layer cancels'' step: only the first and third layers enter, with the interior vertices counted twice.

It remains to evaluate the trapezoidal layer sum. Notice that the $y$ projection of vertical segments of tiles $\tau_{ij}^{D}, j=1,2$ exactly comes from the $x$ projection of horizontal segments of tiles $\tau_{ij}^{C}, j=1,2$. Therefore, the term in parentheses on the right-hand side of Eq.~\eqref{eq:layer-identity} is $4W_C$, where $2W_C$ comes from the second layer of compact pattern and the other two come from the first and third layers.
\end{proof}

\begin{cor}[Arithmetic progression of row-mean levels]
\label{cor:AP}
Under the hypotheses of Theorem~\ref{thm:transpose}, the unweighted row-mean center heights
\begin{equation}
  Y_j\;:=\;\frac1M\sum_{i=1}^{M} y(c^D_{i,j}),
  \qquad j=1,\dots,N,
\end{equation}
form an arithmetic progression of common difference $W_C/M$:
\begin{equation}
\label{eq:Y-AP}
  Y_{j+1}-Y_j\;=\;\frac{W_C}{M}.
\end{equation}
In particular, their variance about their own mean $\bar Y$ is exactly
\begin{equation}
\label{eq:VY}
  V_Y
  \;=\;\frac1N\sum_{j=1}^{N}(Y_j-\bar Y)^2
  \;=\;\frac{W_C^2}{12}\Bigl(\frac{N^2-1}{M^2}\Bigr)
\end{equation}
for $N$ rows, and $V_X=H_C^2(M^2-1)/(12N^2)$ for the conjugate column-mean
levels $X_i$; for $M=N$ these reduce to $W_C^2(1-1/M^2)/12$ and
$H_C^2(1-1/M^2)/12$, respectively.
\end{cor}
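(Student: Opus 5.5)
The plan is to derive Eq.~\eqref{eq:Y-AP} directly from the telescoped layer identity in the proof of Theorem~\ref{thm:transpose}, and then treat Eq.~\eqref{eq:VY} as the variance of an arithmetic progression.

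The first step works with signed differences rather than absolute values. Since $Y_{j+1}-Y_j=\frac1M\sum_{i}\bigl[y(c^D_{i,j+1})-y(c^D_{i,j})\bigr]$, Eq.~\eqref{eq:layer-identity} gives it exactly. The proof of Theorem~\ref{thm:transpose} identifies the bracketed trapezoidal layer sum with $4W_C$, so the right-hand side equals $W_C$ and hence $Y_{j+1}-Y_j=W_C/M$ for every $j=1,\dots,N-1$.

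The main obstacle is the sign convention. Theorem~\ref{thm:transpose} is stated with absolute values $|y(c^D_{i,j+1})-y(c^D_{i,j})|$, whereas the row means need signed increments. I would therefore argue that, in the ideal mechanism, every vertical link $y^{(j+2)}_k-y^{(j)}_k$ is nonnegative. The reason is that it is the sum of the $y$-projections of two tile edges, and these edges come from horizontal tile edges of the compact rectangle rotated by $+\pi/2$ or $-\pi/2$ in the checkerboard. With the orientation chosen so that layers increase upward, as in Fig.~\ref{fig:demo_analysis}\textbf{c}, each projection equals the corresponding positive horizontal length in the compact state. Then the absolute values in Eq.~\eqref{eq:hD-def} are harmless and the signed identity is the one that was actually evaluated to $4W_C$. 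If orientation issues remain, I would instead fix the orientation by assumption, choosing coordinates so that the reconfigured rows are ordered bottom to top.

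The second step is a standard computation. Write $Y_j=Y_1+(j-1)d$ with $d=W_C/M$. Then $\bar Y=Y_1+\frac{N-1}{2}d$, and
\[
V_Y=\frac{d^2}{N}\sum_{j=1}^{N}\Bigl(j-\tfrac{N+1}{2}\Bigr)^2=d^2\,\frac{N^2-1}{12},
\]
using the discrete uniform variance $\sum_{j=1}^{N}(j-\frac{N+1}{2})^2=N(N^2-1)/12$. This yields Eq.~\eqref{eq:VY}.

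The column statement follows by exchanging the roles of $i$ and $j$: Theorem~\ref{thm:transpose} gives increments $H_C/N$ for the column-mean levels $X_i$ over $M$ columns, so $V_X=H_C^2(M^2-1)/(12N^2)$. Setting $M=N$ gives the stated reductions.
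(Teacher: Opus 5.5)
Your proposal is correct and follows the route the paper intends: the corollary is read off from the signed layer identity Eq.~\eqref{eq:layer-identity}, whose trapezoidal sum the proof of Theorem~\ref{thm:transpose} evaluates to $4W_C$, followed by the standard variance $d^2(N^2-1)/12$ of an $N$-term arithmetic progression. Your sign-convention paragraph is not needed, because both the layer identity and the row-mean increments $Y_{j+1}-Y_j$ are signed, so the absolute values in Eq.~\eqref{eq:hD-def} never enter the argument.
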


\begin{remark}
    As one can see, the ratio of unweighted variance $V_X/V_Y$ for a reconfigured compact pattern has the limit $(H_{C}/W_{C})^{2}$ when $M=N$, i.e., we can restore a discrete version of the transposition theorem in this way.
\end{remark}

\begin{cor}[Aspect ratio for $M\times N$ patterns]
\label{cor:MN}
For a pattern of resolution $M\times N$ ($M\neq N$) approximating $\Omega$, combining $r_{\mathrm{in}}(\Omega)^2=\rho_y^2/\rho_x^2$ with Corollary~\ref{cor:AP} yields the exact finite-grid identity
\begin{equation}
\label{eq:MN-exact}
\frac{W_C}{H_C}\Bigl(\frac{N}{M}\Bigr)^2
=
 r_{\mathrm{in}}(\Omega)\;
 \Bigl(\frac{N}{M}\Bigr)\sqrt{\frac{M^2-1}{N^2-1}}\;
 \sqrt{\frac{1+E_x/V_X}{1+E_y/V_Y}},
\end{equation}
where $E_y=\rho_y^2-V_Y$ and $E_x=\rho_x^2-V_X$. 
\end{cor}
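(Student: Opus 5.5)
The identity \eqref{eq:MN-exact} is purely algebraic once the ingredients are fixed, so I would not estimate anything. I would rewrite both centroid variances of the reconfigured compact state through the exact unweighted variances of Corollary~\ref{cor:AP}, and then solve for $W_C/H_C$.

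\emph{Setting up the decomposition.} Here $\rho_y^2$ and $\rho_x^2$ denote the centroid variances of the reconfigured compact state $\Omega_2^{(M)}$, and I take the hypothesis $r_{\mathrm{in}}(\Omega)^2=\rho_y^2/\rho_x^2$ exactly as the statement supplies it. It is the finite-grid reading of Step~2 of the proof of Theorem~\ref{thm:inertia-free}, in which the intra-tile moments are dropped. The definitions $E_y=\rho_y^2-V_Y$ and $E_x=\rho_x^2-V_X$ give tautologically
\begin{equation*}
\rho_y^2=V_Y\bigl(1+E_y/V_Y\bigr),\qquad \rho_x^2=V_X\bigl(1+E_x/V_X\bigr).
\end{equation*}
These are valid whenever $V_X,V_Y>0$, which holds for $M,N\ge2$ by the explicit formulas \eqref{eq:VY}. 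The errors $E_x,E_y$ collect every discrepancy between the area-weighted centroid variance and the unweighted row- and column-mean variance: unequal tile areas, the gap between area centroid and vertex-mean center, and the within-row spread.

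\emph{Substitution.} I would insert $V_Y=W_C^2(N^2-1)/(12M^2)$ and $V_X=H_C^2(M^2-1)/(12N^2)$ from Corollary~\ref{cor:AP}. This gives
\begin{equation*}
r_{\mathrm{in}}(\Omega)^2=\frac{W_C^2}{H_C^2}\,\frac{N^2}{M^2}\,\frac{N^2-1}{M^2-1}\,\frac{1+E_y/V_Y}{1+E_x/V_X}.
\end{equation*}
Taking positive square roots and solving for $W_C/H_C$ yields
\begin{equation*}
\frac{W_C}{H_C}=r_{\mathrm{in}}(\Omega)\,\frac{M}{N}\sqrt{\frac{M^2-1}{N^2-1}}\sqrt{\frac{1+E_x/V_X}{1+E_y/V_Y}}.
\end{equation*}
Multiplying both sides by $(N/M)^2$ gives exactly \eqref{eq:MN-exact}.

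\emph{Main obstacle.} The algebra is routine; the delicate point is bookkeeping of conventions. I must make sure that the ``vertical'' variance $V_Y$, whose spacing is $W_C/M$ by Theorem~\ref{thm:transpose}, is paired with $\rho_y^2$ and hence with $I_{xx}$. Likewise the horizontal $V_X$, with spacing $H_C/N$, must be paired with $\rho_x^2$. This pairing is what makes $W_C$ and $H_C$ appear transposed, and an index swap between $M$ and $N$ would silently invert the factor $\sqrt{(M^2-1)/(N^2-1)}$. I would check this against the square case $M=N$. There \eqref{eq:MN-exact} must reduce to $W_C/H_C=r_{\mathrm{in}}(\Omega)\sqrt{(1+E_x/V_X)/(1+E_y/V_Y)}$, which is consistent with Corollary~\ref{cor:aspect} when the relative errors balance, as in (H3).
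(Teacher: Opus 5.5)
Your proposal is correct and follows the same route the paper intends. Writing $\rho_y^2=V_Y(1+E_y/V_Y)$ and $\rho_x^2=V_X(1+E_x/V_X)$, substituting the closed forms of $V_Y,V_X$ from Corollary~\ref{cor:AP} into $r_{\mathrm{in}}(\Omega)^2=\rho_y^2/\rho_x^2$, and solving for $W_C/H_C$ is exactly the combination the paper cites, and your algebra and $M\leftrightarrow N$ pairing are right (they reproduce the factor $1.0129$ in the SI's $20\times 6$ example), with the word ``exact'' resting, as you note, on taking $r_{\mathrm{in}}(\Omega)^2=\rho_y^2/\rho_x^2$ as given even though at finite resolution it drops the intra-tile moments and the $\Omega$-versus-$\Omega_2^{(M)}$ discrepancy.
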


Consequently, we have
\begin{equation}
\frac{W_C}{H_C}\Bigl(\frac{N}{M}\Bigr)^2\;\longrightarrow\;r_{\mathrm{in}}(\Omega)
\end{equation}
as $M\to\infty$ provided the fluctuation excesses themselves vanish, $E_y/W_C^2,\,E_x/H_C^2\to 0$.  This is strictly stronger than the balanced-fluctuation hypothesis (H3): the two conditions coincide for $M=N$ (because then $V_Y/W_C^2=V_X/H_C^2$), but for $M\neq N$ the prefactor of $E_x$ is amplified by $(M/N)^4$ and the fluctuation ratio in Eq.~\eqref{eq:MN-exact} is of order one.  For the $20\times 6$ wavy pattern of SI Fig.~\ref{fig:demo_init}\textbf{d}, it equals $\approx0.94$, which accounts for the apparent gap between $(W_C/H_C)(N/M)^2$ and $r_{\mathrm{in}}$ (see SI Appendix~\ref{appendix:analysis}).

\subsection{Angle defects and limitations on target shape design}
The following angle-defect argument reveals a fundamental limitation when the target shape of the reconfigured compact state is a general curved surface in the design of 2D-to-3D or 3D-to-3D kirigami structures. Treating the pattern as a piecewise-flat quad surface homeomorphic to a disk, the angle defect $\delta_v=2\pi-\sum_i\alpha_i$ at an interior vertex $v$ ($\alpha_i$ the four incident face angles) is carried by the Gaussian curvature: For a pattern approximating a smooth surface of curvature $K$ with dual area $A_v$, we have
\begin{equation}
\delta_v\;\approx\;\iint_{A_v}K\,dA.
\end{equation}
For a positively curved surface such as a sphere patch, every interior defect is positive, whereas in the planar compact state the angles around every interior point close with no deficit. This discrepancy suggests that there is an inherited limitation on the boundary angles at the two compact states, and consequently certain kirigami transformation effects involving prescribed boundary shape designs are theoretically impossible.

For instance, consider a $2N\times2$ strip of quads whose $N$ interior points $v_1,\dots,v_N$ lie on the sphere patch, and denote by $\alpha_1,\alpha_2,\beta_1,\beta_2$ the four boundary angles of the planar reconfigured compact state (Fig.~\ref{fig:demo_analysis}\textbf{d}). As all interior defects on the sphere patch are positive, balancing the angle sums gives
\begin{equation}
\alpha_1+\alpha_2+\beta_1+\beta_2+ 2\pi (N-1)=\sum_{i=1}^N(2\pi-\delta_{v_i})<2\pi N,
\end{equation}
so that
\begin{equation}
\alpha_1+\alpha_2+\beta_1+\beta_2<2\pi,
\end{equation}
and hence the two pairs $\alpha_1+\alpha_2$, $\beta_1+\beta_2$ cannot both be close to $\pi$, as they would on the straight sides of a rectangle. Therefore, a pattern with a target 3D compact state carrying positive curvature can never maintain a rectangular boundary in its planar compact state. In other words, a compact rectangular kirigami pattern cannot be reconfigured into an arbitrary (positively curved) surface with all vertices lying exactly on the target geometry (as previously described in Section~\ref{sec:results} and illustrated in the half-vase example in Fig.~\ref{fig:example_2D-to-2D}\textbf{d}). 

Another immediate consequence is that one cannot extend the assembly approach in Fig.~\ref{fig:example_3D-to-3D}\textbf{a} for creating a kirigami structure with a \emph{compact-cube-to-compact-sphere} shape-morphing effect, as it would require a square-to-spherical-cap compact reconfigurable design for each of the six sides of the cube, which is theoretically impossible.

\section{Discussion}

Kirigami design has been widely studied in recent years, but there has been a limitation in many prior case-specific design approaches in terms of their generalizability to a wider range of kirigami structures. In this work, we have developed a new unified design framework for kirigami structures. Our framework solves a length-based constrained optimization problem for multiple states of the kirigami structure simultaneously, thereby allowing for the design of kirigami structures with different 2D and 3D shape morphing effects. We can also easily incorporate additional constraints into the optimization problem to yield kirigami structures with compact reconfigurability and rigid deployability. Experimental results with a wide range of highly nontrivial shape-morphing effects have demonstrated the effectiveness and versatility of our unified design framework, shedding light on the application of kirigami to engineering and technological design problems.

From a computational perspective, our novel unified design framework has overcome several key limitations of the existing inverse design approaches, including the applicability to 3D-to-3D design problems and the more precise control of the target shapes in different states of the kirigami structures. Moreover, our mathematical analysis of the independence and redundancy of the constraints and the careful formulation of the optimization problem yield a theoretically supported and numerically efficient method for solving the kirigami design problem. Besides showcasing the effectiveness of our framework for designing a wide variety of 2D and 3D kirigami patterns, our theoretical analysis on the inertia transposition, aspect-ratio law, and angle defects has further elucidated important rules and limitations in kirigami design.

Note that in our current framework, we have only considered 2D and 3D kirigami structures with straight edges and planar tiles. In our future work, we plan to extend our work to the design of other kirigami structures involving curved cuts and curved tiles. More broadly, as our framework primarily involves length-based constraints commonly found in many design problems, it may also naturally be extended to a wide range of mechanical metamaterials and engineering structures beyond kirigami.\\

\textbf{Acknowledgment} \ We thank the CUHK Library Learning Garden Team for support in physical model fabrication.\\

\bibliographystyle{ieeetr}
\bibliography{reference}

\begin{thebibliography}{10}

\bibitem{jiang2022flexible}
S.~Jiang, X.~Liu, J.~Liu, D.~Ye, Y.~Duan, K.~Li, Z.~Yin, and Y.~Huang, ``Flexible metamaterial electronics,'' {\em Adv. Mater.}, vol.~34, no.~52, p.~2200070, 2022.

\bibitem{yang2021grasping}
Y.~Yang, K.~Vella, and D.~P. Holmes, ``Grasping with kirigami shells,'' {\em Sci. Robot.}, vol.~6, no.~54, p.~eabd6426, 2021.

\bibitem{cheng2023programming}
X.~Cheng, Z.~Fan, S.~Yao, T.~Jin, Z.~Lv, Y.~Lan, R.~Bo, Y.~Chen, F.~Zhang, Z.~Shen, {\em et~al.}, ``Programming {3D} curved mesosurfaces using microlattice designs,'' {\em Science}, vol.~379, no.~6638, pp.~1225--1232, 2023.

\bibitem{lamoureux2025kirigami}
D.~Lamoureux, J.~Fillion, S.~Ramananarivo, F.~P. Gosselin, and D.~Melancon, ``Kirigami-inspired parachutes with programmable reconfiguration,'' {\em Nature}, vol.~646, no.~8083, pp.~88--94, 2025.

\bibitem{bertoldi2017flexible}
K.~Bertoldi, V.~Vitelli, J.~Christensen, and M.~Van~Hecke, ``Flexible mechanical metamaterials,'' {\em Nat. Rev. Mater.}, vol.~2, no.~11, pp.~1--11, 2017.

\bibitem{barchiesi2019mechanical}
E.~Barchiesi, M.~Spagnuolo, and L.~Placidi, ``Mechanical metamaterials: a state of the art,'' {\em Math. Mech. Solids}, vol.~24, no.~1, pp.~212--234, 2019.

\bibitem{xia2022responsive}
X.~Xia, C.~M. Spadaccini, and J.~R. Greer, ``Responsive materials architected in space and time,'' {\em Nat. Rev. Mater.}, vol.~7, no.~9, pp.~683--701, 2022.

\bibitem{li2023auxetic}
X.~Li, W.~Peng, W.~Wu, J.~Xiong, and Y.~Lu, ``Auxetic mechanical metamaterials: from soft to stiff,'' {\em Int. J. Extreme Manuf.}, vol.~5, no.~4, p.~042003, 2023.

\bibitem{konakovic2016beyond}
M.~Konakovi{\'c}, K.~Crane, B.~Deng, S.~Bouaziz, D.~Piker, and M.~Pauly, ``Beyond developable: computational design and fabrication with auxetic materials,'' {\em ACM Trans. Graph.}, vol.~35, no.~4, pp.~1--11, 2016.

\bibitem{konakovic2018rapid}
M.~Konakovi{\'c}-Lukovi{\'c}, J.~Panetta, K.~Crane, and M.~Pauly, ``Rapid deployment of curved surfaces via programmable auxetics,'' {\em ACM Trans. Graph.}, vol.~37, no.~4, pp.~1--13, 2018.

\bibitem{chen2021bistable}
T.~Chen, J.~Panetta, M.~Schnaubelt, and M.~Pauly, ``Bistable auxetic surface structures,'' {\em ACM Trans. Graph.}, vol.~40, no.~4, pp.~1--9, 2021.

\bibitem{panetta2021computational}
J.~Panetta, F.~Isvoranu, T.~Chen, E.~Si{\'e}fert, B.~Roman, and M.~Pauly, ``Computational inverse design of surface-based inflatables,'' {\em ACM Trans. Graph.}, vol.~40, no.~4, pp.~1--14, 2021.

\bibitem{nojoomi20212d}
A.~Nojoomi, J.~Jeon, and K.~Yum, ``{2D} material programming for {3D} shaping,'' {\em Nat. Commun.}, vol.~12, no.~1, p.~603, 2021.

\bibitem{liu2021wallpaper}
L.~Liu, G.~P.~T. Choi, and L.~Mahadevan, ``Wallpaper group kirigami,'' {\em Proc. R. Soc. A}, vol.~477, no.~2252, p.~20210161, 2021.

\bibitem{jiang2022shape}
C.~Jiang, F.~Rist, H.~Wang, J.~Wallner, and H.~Pottmann, ``Shape-morphing mechanical metamaterials,'' {\em Comput. Aided Des.}, vol.~143, p.~103146, 2022.

\bibitem{liu2022quasicrystal}
L.~Liu, G.~P.~T. Choi, and L.~Mahadevan, ``Quasicrystal kirigami,'' {\em Phys. Rev. Research}, vol.~4, no.~3, p.~033114, 2022.

\bibitem{zhai2021mechanical}
Z.~Zhai, L.~Wu, and H.~Jiang, ``Mechanical metamaterials based on origami and kirigami,'' {\em Appl. Phys. Rev.}, vol.~8, no.~4, 2021.

\bibitem{jin2024engineering}
L.~Jin and S.~Yang, ``Engineering kirigami frameworks toward real-world applications,'' {\em Adv. Mater.}, vol.~36, no.~9, p.~2308560, 2024.

\bibitem{toyonaga2026collapsible}
N.~Toyonaga, S.~Nishimoto, C.~J. Decker, R.~J. Wood, T.~Tachi, and L.~Mahadevan, ``Collapsible scissored surfaces,'' {\em Proc. Natl. Acad. Sci.}, vol.~123, no.~25, p.~e2605221123, 2026.

\bibitem{choi2019programming}
G.~P.~T. Choi, L.~H. Dudte, and L.~Mahadevan, ``Programming shape using kirigami tessellations,'' {\em Nat. Mater.}, vol.~18, no.~9, pp.~999--1004, 2019.

\bibitem{choi2021compact}
G.~P.~T. Choi, L.~H. Dudte, and L.~Mahadevan, ``Compact reconfigurable kirigami,'' {\em Phys. Rev. Research}, vol.~3, no.~4, p.~043030, 2021.

\bibitem{dudte2023additive}
L.~H. Dudte, G.~P.~T. Choi, K.~P. Becker, and L.~Mahadevan, ``An additive framework for kirigami design,'' {\em Nat. Comput. Sci.}, vol.~3, no.~5, pp.~443--454, 2023.

\bibitem{qiao2025inverse}
C.~Qiao, S.~Chen, Y.~Chen, Z.~Zhou, W.~Jiang, Q.~Wang, X.~Tian, and D.~Pasini, ``Inverse design of kirigami through shape programming of rotating units,'' {\em Phys. Rev. Lett.}, vol.~134, no.~17, p.~176103, 2025.

\bibitem{segall2026uniformly}
A.~Segall, J.~Ren, and O.~Sorkine-Hornung, ``Uniformly deployable kirigami on arbitrary planar graphs,'' {\em ACM Trans. Graph.}, vol.~45, no.~4, pp.~1--17, 2026.

\bibitem{dang2021theorem}
X.~Dang, F.~Feng, H.~Duan, and J.~Wang, ``Theorem for the design of deployable kirigami tessellations with different topologies,'' {\em Phys. Rev. E}, vol.~104, no.~5, p.~055006, 2021.

\bibitem{dang2022theorem}
X.~Dang, F.~Feng, H.~Duan, and J.~Wang, ``Theorem on the compatibility of spherical kirigami tessellations,'' {\em Phys. Rev. Lett.}, vol.~128, no.~3, p.~035501, 2022.

\bibitem{shim2012buckling}
J.~Shim, C.~Perdigou, E.~R. Chen, K.~Bertoldi, and P.~M. Reis, ``Buckling-induced encapsulation of structured elastic shells under pressure,'' {\em Proc. Natl. Acad. Sci.}, vol.~109, no.~16, pp.~5978--5983, 2012.

\bibitem{haghpanah2016multistable}
B.~Haghpanah, L.~Salari-Sharif, P.~Pourrajab, J.~Hopkins, and L.~Valdevit, ``Multistable shape-reconfigurable architected materials,'' {\em Adv. Mater.}, vol.~28, no.~36, pp.~7915--7920, 2016.

\bibitem{overvelde2017rational}
J.~T.~B. Overvelde, J.~C. Weaver, C.~Hoberman, and K.~Bertoldi, ``Rational design of reconfigurable prismatic architected materials,'' {\em Nature}, vol.~541, no.~7637, pp.~347--352, 2017.

\bibitem{li20243d}
T.~Li and Y.~Li, ``{3D} tiled auxetic metamaterial: A new family of mechanical metamaterial with high resilience and mechanical hysteresis,'' {\em Adv. Mater.}, vol.~36, no.~15, p.~2309604, 2024.

\bibitem{zaman2025one}
A.~Zaman, J.~Aslarus, J.~Li, S.~Mueller, and M.~Konakovic~Lukovic, ``One string to pull them all: Fast assembly of curved structures from flat auxetic linkages,'' {\em ACM Trans. Graph.}, vol.~44, no.~6, pp.~1--18, 2025.

\bibitem{gu2024kirigami}
Y.~Gu, Z.~Wei, G.~Wei, Z.~You, J.~Ma, and Y.~Chen, ``Kirigami-inspired three-dimensional metamaterials with programmable isotropic and orthotropic thermal expansion,'' {\em Adv. Mater.}, p.~2411232, 2024.

\bibitem{hong2025reprogrammable}
Y.~Hong, C.~Zhou, H.~Qing, Y.~Chi, and J.~Yin, ``Reprogrammable snapping morphogenesis in ribbon-cluster meta-units using stored elastic energy,'' {\em Nat. Mater.}, vol.~24, no.~11, pp.~1793--1801, 2025.

\bibitem{giomi2026stretching}
L.~Giomi, ``Stretching theory of {H}ookean metashells,'' {\em Phys. Rev. X}, vol.~16, no.~2, p.~021035, 2026.

\bibitem{dang2025shape}
X.~Dang, S.~Chen, A.~E. Acha, L.~Wu, and D.~Pasini, ``Shape and topology morphing of closed surfaces integrating origami and kirigami,'' {\em Sci. Adv.}, vol.~11, no.~18, p.~eads5659, 2025.

\bibitem{segall2025reconfigurable}
A.~Segall, J.~Ren, M.~Padilla, and O.~Sorkine-Hornung, ``Reconfigurable hinged kirigami tessellations,'' in {\em Proceedings of the SIGGRAPH Asia 2025 Conference Papers}, no.~99, pp.~1--11, 2025.

\bibitem{jiang2025pykirigami}
Q.~Jiang and G.~P.~T. Choi, ``{PyKirigami}: An interactive {P}ython simulator for kirigami structures,'' {\em Comput. Phys. Commun.}, vol.~328, p.~110349, 2026.

\bibitem{wachter2006implementation}
A.~W{\"a}chter and L.~T. Biegler, ``On the implementation of an interior-point filter line-search algorithm for large-scale nonlinear programming,'' {\em Math. Program.}, vol.~106, no.~1, pp.~25--57, 2006.

\end{thebibliography}

\clearpage

\centerline{\large\textbf{Supplementary Information}}
\appendix
\renewcommand\thefigure{S\arabic{figure}}    
\setcounter{figure}{0}
\renewcommand\thetable{S\arabic{table}}    
\setcounter{table}{0}


\section{Constrained optimization formulation} \label{appendix:optimization}

\subsection{Residual and Jacobian} \label{appendix:residualandjacobian}
The optimization variable $x\in\mathbb{R}^n$ stacks the coordinates of all vertices in both the initial and deployed configurations. For a 2D problem, each vertex contributes $(x,y)$ (stride~2); for a 3D problem, each vertex contributes $(x,y,z)$ (stride~3). The $x$-coordinate of vertex $k$ is stored at index $d\cdot k-(d-1)$, where $d=2$ or $3$ is the spatial dimension, with $y$ and $z$ at the subsequent indices. All constraints involve only a small, fixed number of vertices, so the Jacobians $J_e$ and $J_a$ are extremely sparse. We assemble them in triplet format $(\text{row},\text{col},\text{value})$ and pass them to the interior-point solver. Below, we give the residual and Jacobian formulas for each type of geometric constraint.

\paragraph{Edge length equality.}
For a pair of corresponding edges, one in the compact configuration with endpoints $\mathbf{p}_a,\mathbf{p}_b$ and one in the deployed configuration with endpoints $\mathbf{p}_c,\mathbf{p}_d$, the residual is
\begin{equation}\label{eq:jac_edge}
    r_{\text{edge}} = \|\mathbf{p}_b-\mathbf{p}_a\|^2 - \|\mathbf{p}_d-\mathbf{p}_c\|^2.
\end{equation}
The nonzero entries of the corresponding Jacobian row are
\begin{equation}
\left\{
\begin{aligned}
    &\nabla_{\mathbf{p}_a} r_{\text{edge}} = -2(\mathbf{p}_b-\mathbf{p}_a),\\
    &\nabla_{\mathbf{p}_b} r_{\text{edge}} =  2(\mathbf{p}_b-\mathbf{p}_a),\\
    &\nabla_{\mathbf{p}_c} r_{\text{edge}} =  2(\mathbf{p}_d-\mathbf{p}_c),\\
    &\nabla_{\mathbf{p}_d} r_{\text{edge}} = -2(\mathbf{p}_d-\mathbf{p}_c).
\end{aligned}
\right.
\end{equation}
When an edge length falls below a threshold $\delta=10^{-4}$, the squared-length residual is replaced by a smoothed length difference
\begin{equation}
    r_{\text{edge}} = \sqrt{\|\mathbf{p}_b-\mathbf{p}_a\|^2+\epsilon^2} - \sqrt{\|\mathbf{p}_d-\mathbf{p}_c\|^2+\epsilon^2},
    \, \epsilon=10^{-8},
\end{equation}
whose Jacobian entries are divided by the corresponding smoothed length, thereby keeping the gradient bounded.

\paragraph{Angle inequality (signed area).}
For three vertices $\mathbf{p}_a,\mathbf{p}_b,\mathbf{p}_c$ taken in counterclockwise order, the signed area (2D cross product) is
\begin{equation}\label{eq:jac_area}
    r_{\text{area}} = (x_b-x_a)(y_c-y_a) - (y_b-y_a)(x_c-x_a),
\end{equation}
with the inequality $r_{\text{area}}\geq \rho>0$ (where $\rho$ is a small positive number) enforcing a positive (counterclockwise) orientation. The Jacobian entries are
\begin{equation}
\left\{
\begin{aligned}
    \nabla_{\mathbf{p}_a} r_{\text{area}} &= \bigl(y_b-y_c,\; x_c-x_b\bigr),\\
    \nabla_{\mathbf{p}_b} r_{\text{area}} &= \bigl(y_c-y_a,\; x_a-x_c\bigr),\\
    \nabla_{\mathbf{p}_c} r_{\text{area}} &= \bigl(y_a-y_b,\; x_b-x_a\bigr).
\end{aligned}
\right.
\end{equation}
This unified formula serves three roles in our framework:
\begin{enumerate}
    \item \textbf{Cutting angle positivity} in the deployed pattern (inequality $r_{\text{area}}\geq 0$ for the three vertices of each cutting angle);
    \item \textbf{Interior angle bounds} $0<\theta<\pi$ for each quadrilateral tile, enforced by applying the inequality to all four corner triplets;
    \item \textbf{Slit collinearity} for rigid deployability (equality $r_{\text{area}}=0$ for the three vertices of each slit), or equivalently via the dual-angle trick $\theta_1\leq\pi,\;\theta_2\leq\pi$.
\end{enumerate}

\paragraph{Planarity (for 3D design only).}
For a deployed quadrilateral with vertices $\mathbf{p}_a,\mathbf{p}_b,\mathbf{p}_c,\mathbf{p}_d$ in 3D, the planarity condition is that the signed volume of the tetrahedron vanishes:
\begin{equation}\label{eq:jac_planar}
    r_{\text{planar}} = \bigl((\mathbf{p}_b-\mathbf{p}_a)\times(\mathbf{p}_c-\mathbf{p}_a)\bigr)\cdot(\mathbf{p}_d-\mathbf{p}_a).
\end{equation}
Define the edge vectors $\mathbf{v}_{ab}=\mathbf{p}_b-\mathbf{p}_a$, $\mathbf{v}_{ac}=\mathbf{p}_c-\mathbf{p}_a$, $\mathbf{v}_{ad}=\mathbf{p}_d-\mathbf{p}_a$. The Jacobian entries are
\begin{equation}
\begin{aligned}
    \nabla_{\mathbf{p}_b} r_{\text{planar}} &= \mathbf{v}_{ac}\times\mathbf{v}_{ad},\\
    \nabla_{\mathbf{p}_c} r_{\text{planar}} &= \mathbf{v}_{ad}\times\mathbf{v}_{ab},\\
    \nabla_{\mathbf{p}_d} r_{\text{planar}} &= \mathbf{v}_{ab}\times\mathbf{v}_{ac},\\
    \nabla_{\mathbf{p}_a} r_{\text{planar}} &= -(\nabla_{\mathbf{p}_b} r_{\text{planar}} + \nabla_{\mathbf{p}_c} r_{\text{planar}} + \nabla_{\mathbf{p}_d} r_{\text{planar}}),
\end{aligned}
\end{equation}
where the last identity follows from the translation invariance of the volume. In the 2D-to-3D and 3D-to-3D settings, $r_{\text{planar}}=0$ replaces the omitted second diagonal length constraint; together with the five edge length equalities~\eqref{eq:five_cons}, this guarantees that each tile is isometric and planar.

\subsection{Possible degeneracy in KKT} \label{appendix:KKT}

Consider a quadrilateral with side lengths $s_1,s_2,s_3,s_4$ (in cyclic order) and diagonals $d_1,d_2$. Let the compact configuration have these lengths and the deployed configuration have the corresponding lengths $\tilde{s}_1,\tilde{s}_2,\tilde{s}_3,\tilde{s}_4,\tilde{d}_1,\tilde{d}_2$. Define the squared quantities
\begin{equation}
\begin{aligned}
    x=s_1^2,\; y=s_2^2,\; z=s_3^2,\; w=s_4^2,\; P=d_1^2, \\
    \tilde{x}=\tilde{s}_1^2,\; \tilde{y}=\tilde{s}_2^2,\; \tilde{z}=\tilde{s}_3^2,\; \tilde{w}=\tilde{s}_4^2,\; \tilde{P}=\tilde{d}_1^2.
\end{aligned} 
\end{equation}
Applying the law of cosines to the two triangles sharing diagonal $d_1$ and eliminating the angles yields an explicit expression for the squared second diagonal:
\begin{equation}\label{eq:F_explicit}
    d_2^2 = F(x,y,z,w,P) = x + w - \frac{AB - \sqrt{UV}}{2P},
\end{equation}
where
\begin{equation}
\left\{
    \begin{aligned}
        A &= P + x - y, \\
        B &= P + w - z,\\
        U &= 4xP - A^2 = (2s_1 d_1 \sin\theta_1)^2 \geq 0,\\
        V &= 4wP - B^2 = (2s_4 d_1 \sin\theta_4)^2 \geq 0,
    \end{aligned}
    \right.
\end{equation}
with $\theta_1$ ($\theta_4$) being the angle between side $s_1$ ($s_4$) and diagonal $d_1$. The quantities $U$ and $V$ are $16$ times the squared areas of the two subtriangles; they are nonnegative by the triangle inequality and vanish only when the corresponding triangle degenerates.

Now, in our framework we enforce five edge-length equality constraints per quadrilateral (four sides and the diagonal $d_1$):
\begin{equation}\label{eq:five_cons}
\left\{
\begin{aligned}
    &e_1 = x-\tilde{x}=0, \\
    &e_2 = y-\tilde{y}=0, \\
    &e_3 = z-\tilde{z}=0, \\
   &e_4 = w-\tilde{w}=0, \\
    &e_5 = P-\tilde{P}=0.
\end{aligned}
\right.
\end{equation}
The sixth constraint (the second diagonal) would be
\begin{equation}\label{eq:sixth_cons}
    \begin{aligned}
        e_6 &= d_2^2 - \tilde{d}_2^2\\
        & = F(x,y,z,w,P) - F(\tilde{x},\tilde{y},\tilde{z},\tilde{w},\tilde{P})\\
        &= 0.
    \end{aligned}
\end{equation}
Differentiating Eq.~\eqref{eq:sixth_cons} gives
\begin{equation}
    \nabla e_6 = \sum_{i=1}^{5} \frac{\partial F}{\partial u_i}\,\nabla u_i \;-\; \sum_{i=1}^{5} \frac{\partial F}{\partial \tilde{u}_i}\,\nabla \tilde{u}_i,
\end{equation}
where $(u_1,\dots,u_5)=(x,y,z,w,P)$ and $(\tilde{u}_1,\dots,\tilde{u}_5)=(\tilde{x},\tilde{y},\tilde{z},\tilde{w},\tilde{P})$.
At a feasible point, Eq.~\eqref{eq:five_cons} holds, so $u_i = \tilde{u}_i$ and consequently $\frac{\partial F}{\partial u_i} = \frac{\partial F}{\partial \tilde{u}_i}$ for each $i=1,\dots,5$. Denoting these common values by
\begin{equation}
    \beta_i = \frac{\partial F}{\partial u_i}\Big|_{u=\tilde{u}},\qquad i=1,\dots,5,
\end{equation}
we obtain the linear dependency
\begin{equation}\label{eq:gradient_dependency}
    \nabla e_6 = \sum_{i=1}^{5} \beta_i\,(\nabla u_i - \nabla\tilde{u}_i) = \sum_{i=1}^{5} \beta_i\,\nabla e_i.
\end{equation}
Hence, around any feasible point, the sixth row of the Jacobian $J_e$ is a linear combination of the first five rows. Inserting all six constraints would therefore make the KKT matrix in main text Eq.~\eqref{eq:KKT} singular, causing severe numerical instability in the interior-point solver.

\textbf{Why linear dependency detection fails.}
Importantly, the coefficients $\beta_i$ in~\eqref{eq:gradient_dependency} are \emph{not} constant: they depend on the current geometry through~\eqref{eq:F_explicit}. Standard pre-solvers such as \textsc{IPOPT}'s \texttt{DependencyDetector} can remove constraints only when they are \emph{constant} linear combinations of others (i.e., $e_6(x) = \sum_{i=1}^5 c_i e_i(x)$ with fixed scalars $c_i$). Here, the dependency is nonlinear in the variables, so it escapes automatic detection. Consequently, we must explicitly omit the second diagonal constraint from the formulation and rely on the angle inequality constraints $0<\theta<\pi$ to select the correct isometric branch.

\subsection{Error analysis for the optimization scheme} \label{appendix:erroranalysis}
The preceding subsection shows \emph{why} the sixth edge-length constraint must be omitted from the KKT system. 
A natural practical question follows: when the five enforced constraints in Eq.~\eqref{eq:five_cons} are satisfied to a solver tolerance $\varepsilon$, what error should we expect in the omitted diagonal length $d_2$, and in the actual edge lengths $l_c,l_d$ themselves?

\paragraph{From squared-length residual to length error.}
Our edge-length constraints use the squared difference $l_c^2 - l_d^2 = 0$ rather than $l_c - l_d = 0$. The actual length discrepancy satisfies
\begin{equation}\label{eq:sq_to_len}
    |l_c - l_d|
    = \frac{|l_c^2 - l_d^2|}{l_c + l_d}
    \leq \frac{\varepsilon}{l_c + l_d}.
\end{equation}
If an edge shrinks to near-zero length during optimization ($l_c + l_d \to 0$), the length error could degrade to $O(\varepsilon^{1/2})$. The angle inequality constraints $0<\theta<\pi$ prevent tiles from collapsing to zero area, hence $l_c$ and $l_d$ remain bounded away from zero \emph{provided the problem is not over-constrained}. However, when the target shapes require an extreme change in area or the initial guess is far from feasible, some quads may become very thin, thereby weakening the bound. A well-scaled initial guess with edge lengths $O(1)$, obtained by rescaling the target geometry as described in the following section on initialization, mitigates this risk in practice.

\paragraph{Residual amplification for the omitted diagonal.}
Recall the sixth constraint $e_6 = F(x,y,z,w,P)-F(\tilde{x},\tilde{y},\tilde{z},\tilde{w},\tilde{P})$ from Eq.~\eqref{eq:sixth_cons}, with $F$ given explicitly by Eq.~\eqref{eq:F_explicit}. Assume the five enforced constraints satisfy $|e_i| = |u_i-\tilde{u}_i| \leq \varepsilon$, where $(u_1,\dots,u_5)=(x,y,z,w,P)$. Since $F$ is $C^1$ on any open set with $P>0$ and $U,V>0$ (guaranteed by $0<\theta<\pi$), the mean-value theorem applied along the segment $\tilde{u}+t(u-\tilde{u})$, $t\in[0,1]$ yields
\begin{equation}
    |e_6|
    \;\leq\;
    \sum_{i=1}^{5}
    \sup_{t\in[0,1]}
    \left|\frac{\partial F}{\partial u_i}\!\left(\tilde{u}+t(u-\tilde{u})\right)\right|
    \cdot |u_i - \tilde{u}_i|
    \;\leq\;
    C_q\,\varepsilon,
\end{equation}
where $\varepsilon$ is small enough that the segment lies in a compact neighborhood of the feasible point and
\begin{equation}\label{eq:Cq}
    C_q \;:=\; \sum_{i=1}^{5}|\beta_i|,
    \qquad
    \beta_i = \frac{\partial F}{\partial u_i}\bigg|_{u=\tilde{u}}.
\end{equation}

\begin{figure}[!t]
    \centering
    \includegraphics[width=\linewidth]{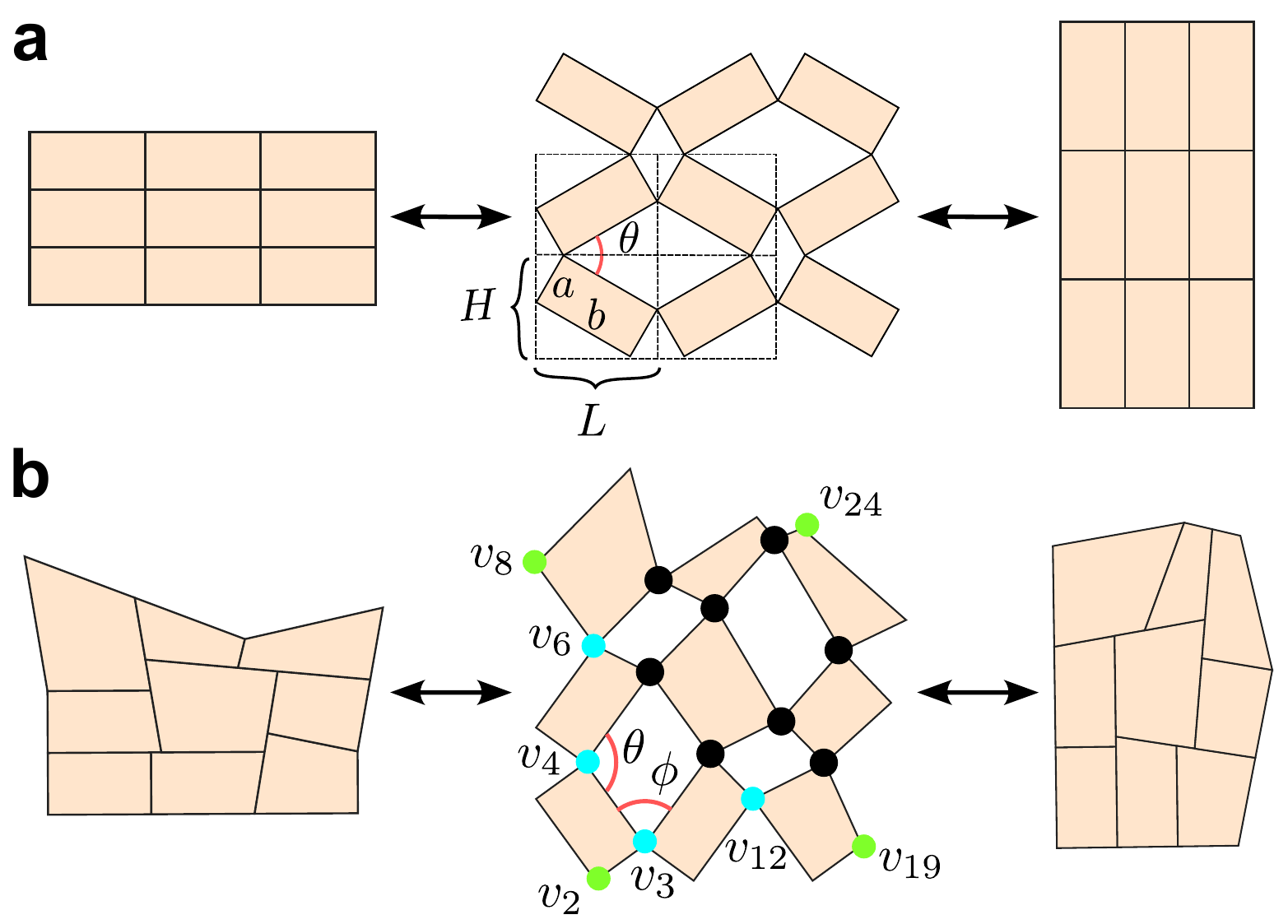}
    \caption{\textbf{Demonstration about tessellation creation.} \textbf{a},~At $\theta=0$ (initial compact state), we have $L=b$ (where $b$ is the tile width) and $H=a$ (where $a$ is the tile height). The compact rectangle spans $M\times b$ and $N\times a$, giving $h/w = (N/M)r$ where $r:=a/b$. As $\theta$ increases, we have $L = a \sin(\theta)+b\cos(\theta)$ and $H = a \cos(\theta) + b \sin(\theta)$. \textbf{b}, We can exploit the linear parameterization~\cite{dudte2023additive} to create variants of the above tessellation with length offsets in the slits. Specifically, given deployment angle $\phi$, we generate the standard rectangle tessellation with $\theta = \pi-\phi$ shown in \textbf{a} and extract the four corner vertices marked in green and four seed vertices marked in cyan. Then we can get the vertices marked in black by linear recursion and the remaining unmarked vertices by ghost void with offset~0. The four offsets for the voids are $[0.1, 0.5; 0.3, -0.3]$.}
    \label{fig:make_tess}
\end{figure}

\paragraph{Boundedness of $C_q$.}
Each $\beta_i$ is a rational function of $A,B,U,V,P,R$. Although we have a nondegeneracy condition about $R$:
\begin{equation}\label{eq:nondegen}
    U, V \;\geq\; 4\rho^2 > 0 \ \ 
    \Longrightarrow \ \  R=\sqrt{UV}\geq 4\rho^2,
\end{equation}
due to the area bound related to the angle constraint~\eqref{eq:jac_area}, the denominator $R$ can be quite large when the quadrilateral degenerates into a triangle. Fortunately, the diagonal edge length is exactly one side edge length in this case, so that it is already bounded by $O(\epsilon^{1/2})$ as we mentioned before. 

In the cases where the quadrilateral is uniform, the \emph{local} amplification $C_q$ evaluated at the target geometry is bounded cheaply, and the omitted-diagonal residual tracks the solver tolerance as $|e_6|\le C_q^{\mathrm{loc}}\,\varepsilon$, with the corresponding length error $|d_2-\tilde d_2|\le C_q^{\mathrm{loc}}\,\varepsilon / (d_2+\tilde d_2)$.

\paragraph{Practical tolerance guidance.}
Combining Eq.~\eqref{eq:sq_to_len} with the diagonal residual bound: For a solver tolerance $\varepsilon = 10^{-6}$ to $10^{-8}$ and a well-scaled problem with $l_c+l_d = O(1)$ and $C_q = O(1)$, the actual edge-length error and the omitted-diagonal discrepancy are both $O(\varepsilon)$. In practice, we recommend:
\begin{enumerate}
    \item \textbf{Rescale the target geometry} so that all initial edge lengths are $O(1)$ (see the following section on initialization);
    \item \textbf{Include the angle inequality constraints} $0<\theta<\pi$ for every interior angle of every quadrilateral, which keeps $P>0$, $U,V>0$ and hence $C_q$ bounded;
    \item \textbf{Monitor the maximum edge-length discrepancy and the omitted-diagonal residuals as diagnostics}. If they exceed $10^3\varepsilon$, the problem may be over-constrained or the initial guess may need improvement.
\end{enumerate}

\begin{figure*}[t]
    \centering
    \includegraphics[width=\linewidth]{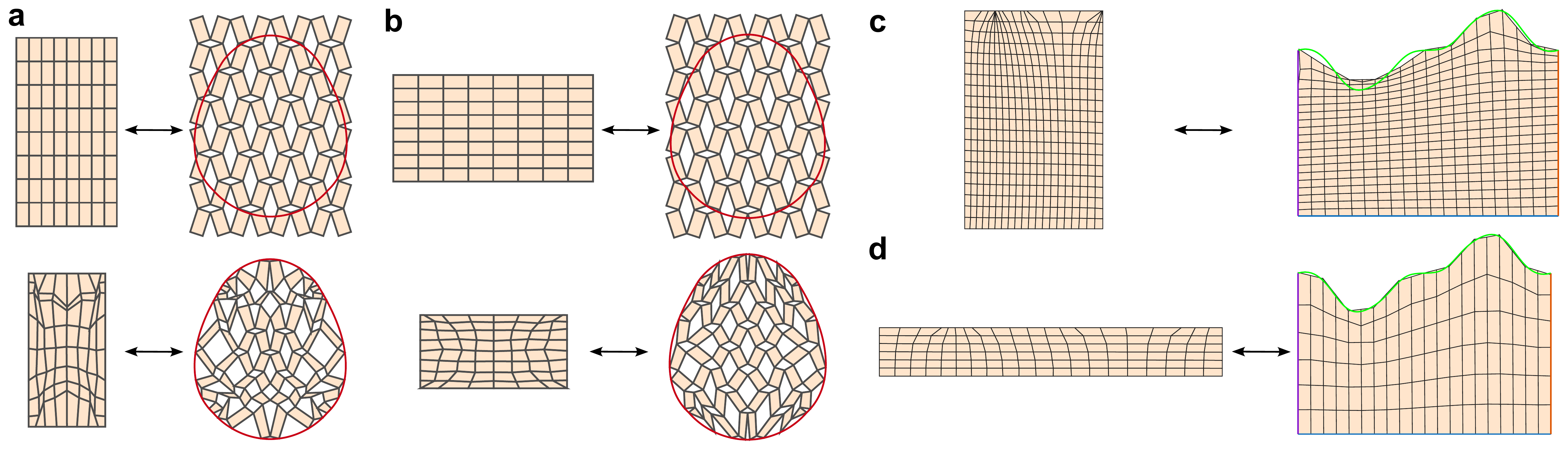}
    \caption{\textbf{Demonstration about selecting a suitable rectangular kirigami tessellation for initialization.}  Here, we consider designing a kirigami structure that approximates an egg shape when deployed. \textbf{a}, The first row shows a set of initial guesses with a compact rectangular configuration with $a=2,b=1,\theta = 0$ and a deployed configuration $a=2, b=1, 2\theta = 0.2\pi$. The second row shows the contracted and deployed shape of the optimized result. \textbf{b}, The first row shows a set of initial guesses with a compact rectangular configuration with $a=1,b=2,\theta=0$  and a deployed configuration with $a=1, b=2, 2\theta = 0.8\pi$, with the same target egg shape. The second row shows the contracted and deployed shape of the optimized result. \textbf{c}, The optimized result from a rectangle shape with resolution $20\times 20$ to a wavy target shape with top piece $\sin^3(x), x\in[-\pi, \pi]$ and one can see that some quadrilaterals degenerate into triangles around the valley part of $\sin$ curve. \textbf{d}, By adjusting the resolution $M\times N$ of the kirigami structure (here we choose $20\times6$), we can produce a better result with more regular tile geometries.}
    \label{fig:demo_init}
\end{figure*}

\subsection{Basic tessellation settings}
The standard rotating-rectangles kirigami tessellation plays an important role in our unified design framework, so we first fix its discrete geometry.  It is an $M\times N$ array of congruent rectangular tiles labelled $(i,j)$, $i=1,\dots,M$, $j=1,\dots,N$, starting from the bottom-left tile $(1,1)$ and ending at the upper-right tile $(M,N)$ (Fig.~\ref{fig:make_tess}\textbf{a}). For the four vertices of tile $(i,j)$, we follow the indexing mapping approach in~\cite{dang2025shape} and assign a global vertex ID to each of them:
\begin{equation}
m(i,j,k)=\text{ID}, \quad 1\le i\le M, 1\le j\le N, 1\le k \le 4.
\end{equation}
This makes it very convenient to call the vertices in our implementation.

In the standard rotating-rectangles tessellation shown in Fig.~\ref{fig:make_tess}\textbf{a}, each tile has width $b$ and height $a$, and we write $r:=a/b$ for their ratio. A deployment is obtained by rotating every tile about its center by a cutting angle $\theta$; adjacent tiles rotate in opposite senses, so the negative space around each interior grid vertex opens into a four-sided gap.

We remark that the geometry of the negative space may also affect the index mapping. For example, each negative space shown in Fig.~\ref{fig:make_tess}\textbf{a} is a rhombus and is contracted to a straight line with only 3 different points in the two contracted states, while each negative space shown in Fig.~\ref{fig:make_tess}\textbf{b} produced using the linear parameterization detailed in~\cite{dudte2023additive} is a non-rhombus parallelogram and is contracted to a straight line with 4 different points in the contracted states. This makes a difference when we create a mapping for the compact pattern: there are $16$ vertices in the former while there are $20$ vertices in the latter case.

\subsection{On the initial guesses}
Note that our optimization framework requires an initial guess. In particular, since the optimization involves the vertex coordinates in two states of the kirigami structures, we need to provide an initial guess for both states.

There are multiple ways to provide the initial guesses. In~\cite{choi2019programming}, different approaches were utilized for constructing the initial deployed pattern for their inverse design problem:
\begin{enumerate}[(i)]
    \item The standard fully deployed configuration.
    \item A rescaled version of (i).
    \item A conformal/quasi-conformal map of (i) onto the target shape.
\end{enumerate}

In our proposed unified design framework, we can set the initial guesses for each of the contracted and deployed patterns independently using any of the above approaches. Moreover, note that the fully deployed configuration (with cutting angle $\pi/2$) may not always be the most suitable choice for fitting certain shapes. Therefore, for constructing the initial guess for each state, our framework allows any of the following inputs:
\begin{enumerate}[(i)]
    \item The standard contracted or deployed configuration \emph{with an arbitrary cutting angle}. 
    \item A rescaled version of (i).
    \item A conformal/quasi-conformal map of (i) onto the target shape.
    \item Other locally injective mappings of (i) are also acceptable. 
\end{enumerate}

The standard rectangle tessellation in Fig.~\ref{fig:make_tess} is exploited frequently in the initialization part, and here we give some practical suggestions about creating a good initial guess, which can save a lot of computation time or improve the optimization greatly. As one can see, the configuration at cutting angle $2\theta = \pi$ resembles an area-preserving map that stretches one direction and compresses another direction. For the rectangle tessellation in Fig.~\ref{fig:make_tess} with $a=1, b=2$, $\theta = \pi/4$ is a critical point where the whole pattern looks like a square. Therefore, for a target shape like an egg whose height is larger than its width, we can expect two ways to get it from a rectangular compact shape. One is to start with rectangle $T$ whose height is larger than its width and cutting angle smaller than $\pi/2$ (see Fig.~\ref{fig:demo_init}\textbf{a}). The other is to begin with rectangle $T$ whose height is smaller than its width and cutting angle greater than $\pi/2$ (see Fig.~\ref{fig:demo_init}\textbf{b}). Generally, the initial guess with a larger cutting angle is better since it allows more uniform deformation between the compact pattern and the deployed pattern.

Besides the tile aspect ratio $a/b$ and the cutting angle, the grid counts $M,N$ themselves control the local tile shape and hence the deployed shape that can be achieved. Consider a target whose top piece is the curve $\sin^3(x)$, $x\in[-\pi,\pi]$, with inertia ratio $r_{\mathrm{in}}\approx0.675$ (Fig.~\ref{fig:demo_init}\textbf{c}).  By the aspect-ratio law (main text Corollary~\ref{cor:aspect}), a square $20\times20$ pattern requires a compact rectangle with $W_C/H_C\approx0.675$, i.e.\ taller than wide; since the local tile is the transposed rectangle, such tiles are short and wide, and near the valleys of the top piece, where the required boundary slope demands tall tiles, the quadrilaterals degenerate into triangles despite the constraints $a(x)\geq\varepsilon$ (Fig.~\ref{fig:demo_init}\textbf{c}).  Choosing the non-square resolution $20\times6$ instead, the finite-grid law (main text Corollary~\ref{cor:MN}, Section~\ref{appendix:MN}) prescribes the much larger compact ratio $W_C/H_C\approx7.07$, and the local tiles are now tall enough to follow the required slope, yielding a much better result (Fig.~\ref{fig:demo_init}\textbf{d}).  Together with the aspect-ratio discussion above, this shows how the tessellation parameters $a,b,M,N$ and the cutting angle jointly determine the deployed shape and its approximation quality.

\subsection{Target shape splitting}
When we approximate a closed curve, it is natural to split it into some pieces to be approximated by boundary points or boundary edges of the kirigami pattern. Since the standard rotating-rectangles pattern has four apparent boundary sets (bottom, right, top, right), it is reasonable to enforce the correspondence between the boundary points or edges and curve pieces of the target shape, especially when the target shape is formed by multiple curve pieces with a discontinuous joint part or exhibits clear symmetry. Computationally, different splitting strategies reduce the search region for the boundary points and can make the optimization process more efficient, at the cost of not being able to find a potentially better result.

For instance, splitting the target boundary shape into four equal pieces can potentially give a more symmetric and regular optimization result, as shown in Fig.~\ref{fig:target-shape-splitting} (left). A similar phenomenon can be seen in Fig.~\ref{fig:SI_results_with_initial_guess_2D}, where the four sides of the rainbow shape naturally correspond to the four sides of the kirigami pattern. Alternatively, one may consider other uneven splitting strategies (Fig.~\ref{fig:target-shape-splitting}\textbf{a}, right) or even keep the entire boundary as a single curve (Fig.~\ref{fig:target-shape-splitting}\textbf{b}, right), which yield highly different designs.

\begin{figure}[t]
    \centering
    \includegraphics[width=0.95\linewidth]{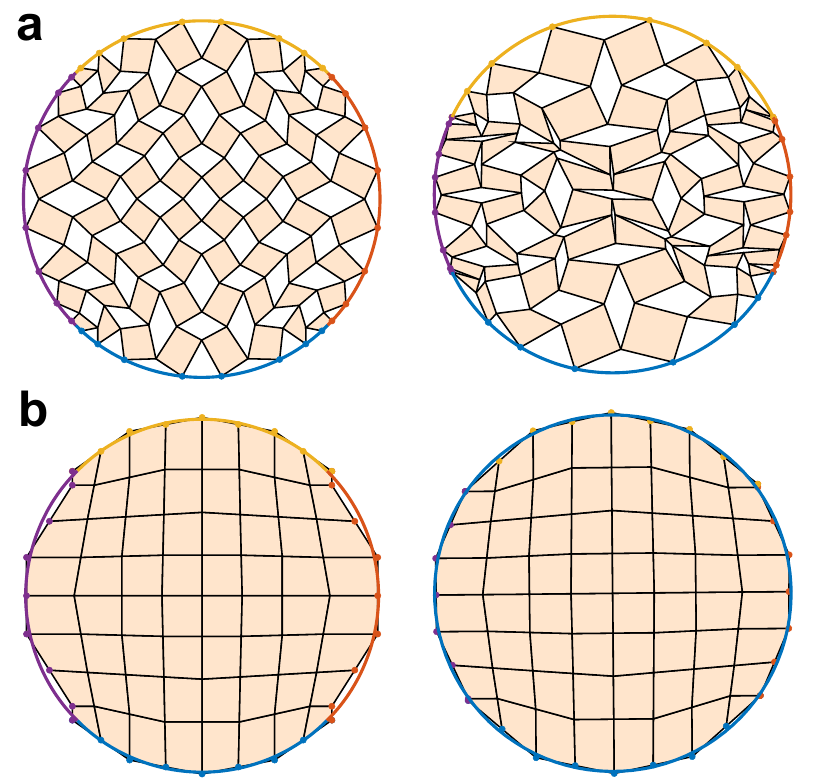}
    \caption{\textbf{Demonstration of target shape splitting in the kirigami design.} \textbf{a}, Deployed kirigami design results yielded by different splitting strategies. (Left) We divide the circle into 4 equal arcs. (Right) We increase the length of the top and bottom boundary curves. It can be observed that they yield significantly different results. \textbf{b},~Compact reconfigurable rectangle-to-circle kirigami design results produced by splitting the circle into 4 equal pieces (left) and keeping a whole circle (right). In both cases, the kirigami structure morphs from a compact rectangular shape to the displayed compact circle.}
    \label{fig:target-shape-splitting}
\end{figure}

\section{Additional results}\label{appendix:additionalresults}

\subsection{2D kirigami structures}

\begin{figure*}[t!]
    \centering
    \includegraphics[width=1\linewidth]{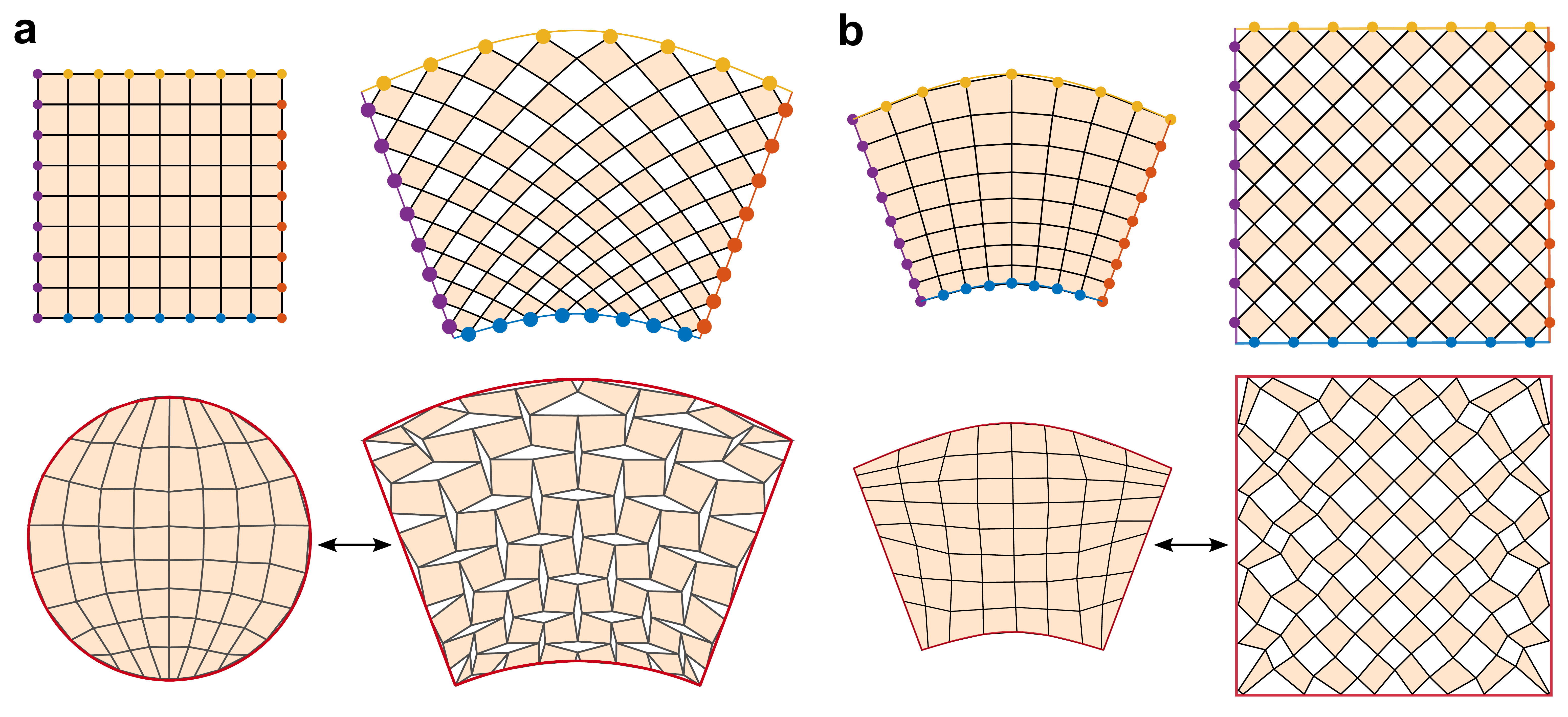}
    \caption{\textbf{Additional examples of 2D kirigami structures designed by our framework with the supplied initial guesses.} \textbf{a},~A circle-to-rainbow kirigami structure. \textbf{b},~A kirigami structure with a rainbow-to-rectangle shape change. The top row shows the initial guesses for both the contracted and deployed patterns. The colors indicate the correspondence between the boundary points and the target shapes. The bottom row shows the optimized results. }
    \label{fig:SI_results_with_initial_guess_2D}
\end{figure*}

\begin{figure*}[t!]
    \centering
    \includegraphics[width=0.95\linewidth]{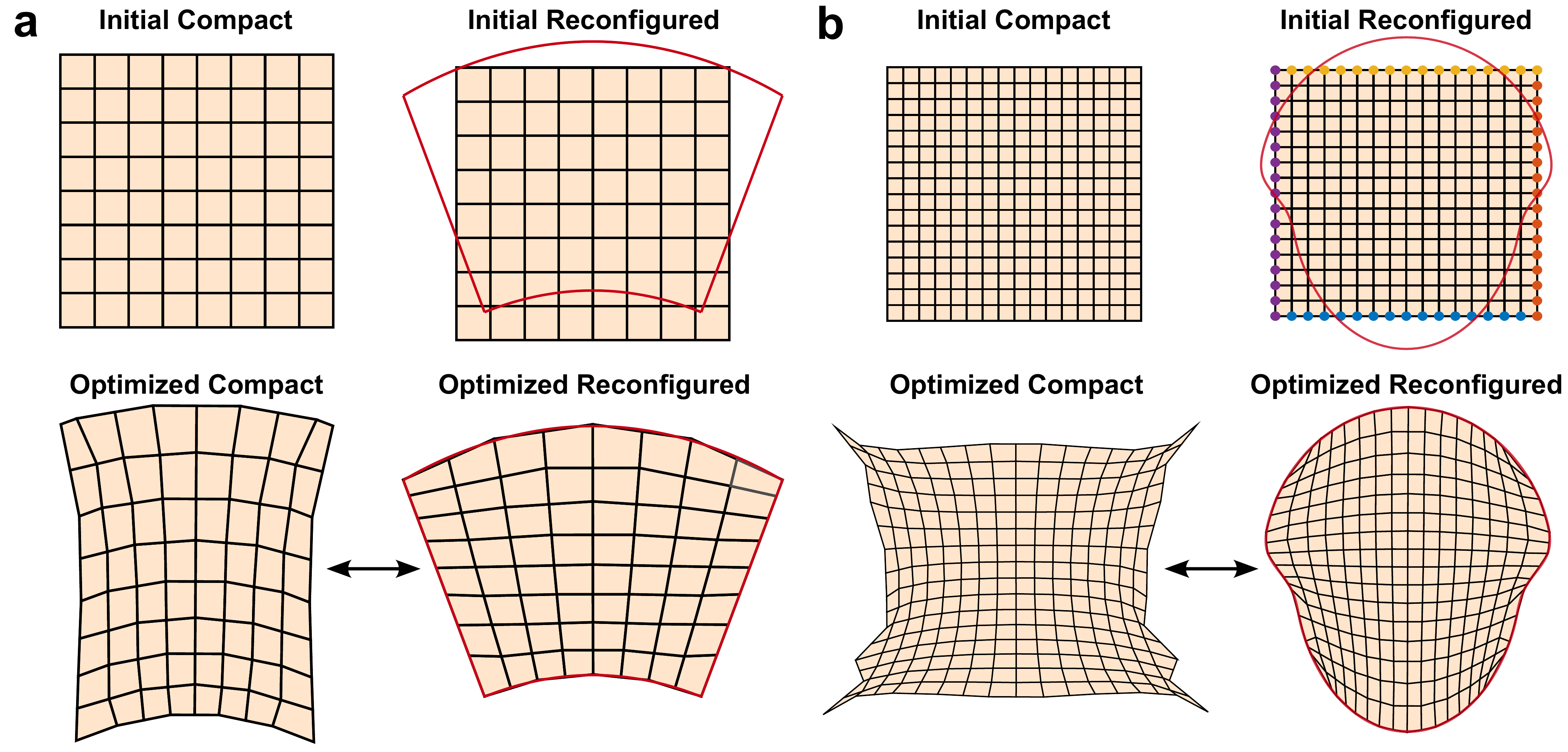}
    \caption{\textbf{Additional examples of 2D rigidly deployable, compact reconfigurable kirigami structures designed by our framework with the supplied initial guesses.} \textbf{a}, A kirigami structure that can morph from a compact pattern to a target rainbow shape at the reconfigured contracted state. \textbf{b}, A kirigami structure that can morph from a compact pattern to a target face shape at the reconfigured contracted state. In both examples, the top row shows the initial guesses for both the contracted and deployed patterns, and the bottom row shows the optimized results (where the first compact state is optimized with free boundary and the reconfigured compact state is optimized to match the target shape).}
    \label{fig:reconfig_more_examples}
\end{figure*}

In Fig.~\ref{fig:SI_results_with_initial_guess_2D}\textbf{a}, we show an example of a circle-to-rainbow kirigami structure designed by our framework. Here, for the initial guess of the first state, we simply use a standard contracted configuration of the rotating squares pattern (i.e., a pattern with cutting angle = $0$). For the deployed state, we use a quasi-conformal map as in~\cite{choi2019programming} to obtain an initial guess that approximates the target rainbow shape. In Fig.~\ref{fig:SI_results_with_initial_guess_2D}\textbf{b}, we show a kirigami structure with a rainbow-to-rectangle shape change. This time, we apply a quasi-conformal map on the standard contracted configuration to approximate the target rainbow shape in the contracted state. As for the target square shape in the deployed state, we simply use an identity map of the fully deployed configuration of the rotating squares pattern.

In Fig.~\ref{fig:reconfig_more_examples}, we show two additional examples of rigidly deployable and compact reconfigurable kirigami structures designed by our framework, in which we aim to approximate a rainbow shape and a face shape at the target reconfigured (second) contracted state. In both examples, we start with a standard rotating-squares tessellation as an initial guess and rescale it to match the area of the domain bounded by the target curve. It can be observed that our framework yields very satisfactory results, with the target curve well approximated and the overall kirigami structure highly regular in shape.

\begin{figure*}[t!]
    \centering
    \includegraphics[width=\linewidth]{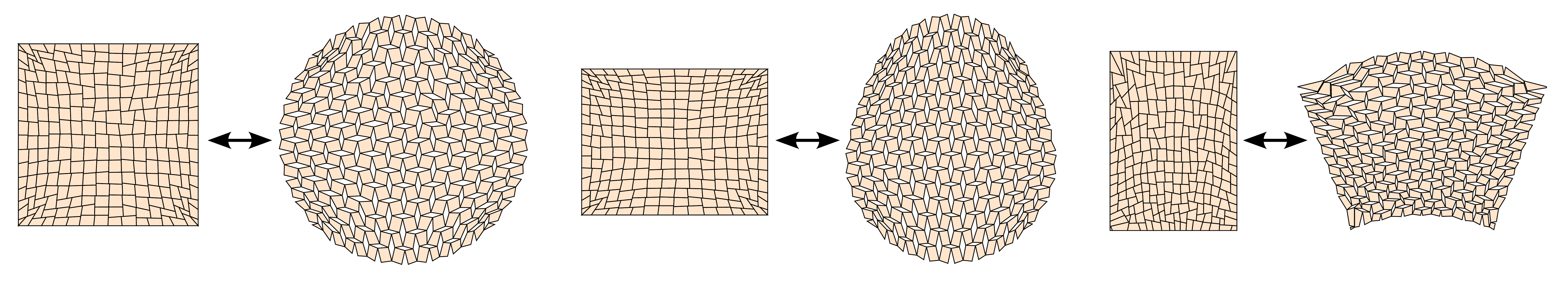}
    \caption{\textbf{Rigidly deployable kirigami structures with non-rhombus parallelogram negative spaces designed by our framework.} For each target shape (circle, egg, rainbow), we obtain a kirigami structure that can morph from a rectangular contracted state to a deployed state matching the target shape. As all negative spaces are non-rhombus parallelograms and form straight lines in the contracted state, the structures are rigidly deployable. }
    \label{fig:examples_parallelogram_slit}
\end{figure*}

Besides, as mentioned in the main text, our design framework can also be used for designing rigidly deployable kirigami structures with non-rhombus parallelogram negative spaces (i.e., with a non-zero length offset). In Fig.~\ref{fig:examples_parallelogram_slit}, we present several examples of kirigami structures produced under this setting. It can be observed that every negative space forms a straight slit in the contracted state but consists of four distinct points instead of three. This shows that our framework is highly flexible in producing kirigami structures with different desired properties.

\begin{figure*}[t!]
    \centering
    \includegraphics[width=\linewidth]{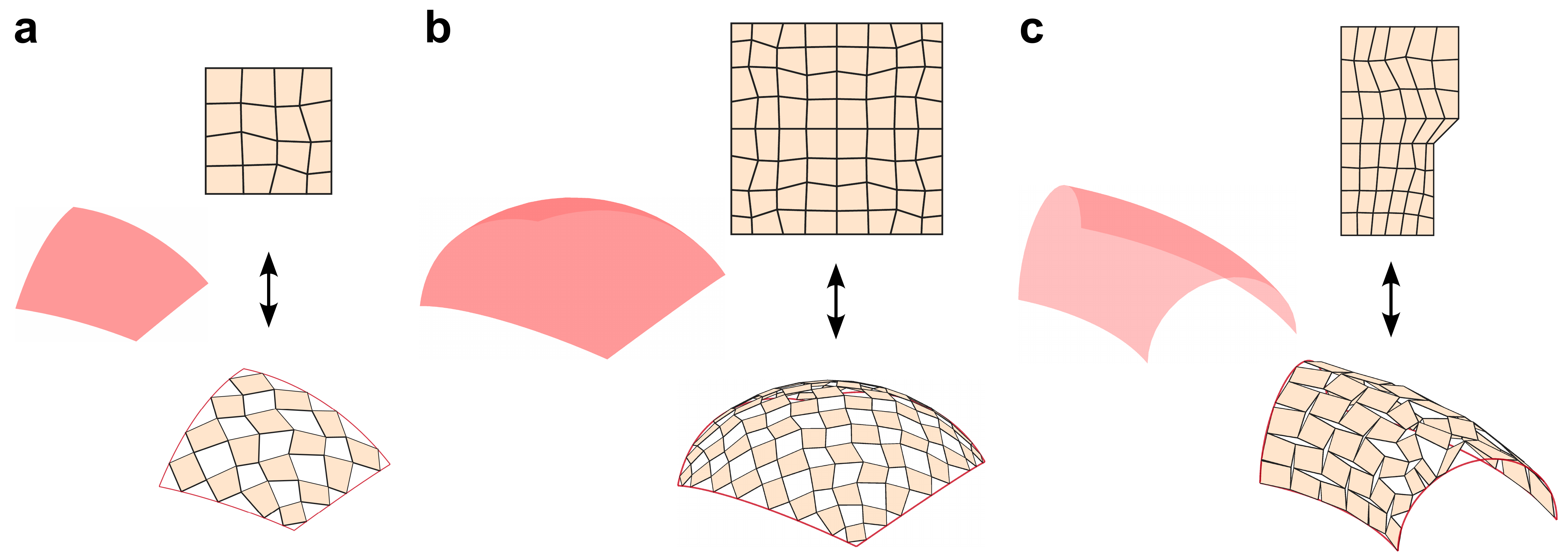}
    \caption{\textbf{2D-to-3D compact reconfigurable kirigami structures designed by our framework}. \textbf{a}, A 2D-to-3D kirigami structure that morphs from a 2D compact square to a prescribed portion of a 3D spherical cap. \textbf{b}, By assembling four copies of the structure in \textbf{a}, one can approximate a $1/6$ portion of the sphere. \textbf{c}, Analogously, we can solve the 2D-to-3D kirigami design problem for a partial torus for further assembly into a full torus.}
    \label{fig:SI_results_2Dto3D}
\end{figure*}

\subsection{3D kirigami structures}

In the main text, we have presented various 2D-to-3D kirigami structures designed by our framework. Additionally, the 2D-to-3D kirigami design can also be utilized to simplify 3D-to-3D kirigami design problems. As described in the main text, one way to design 3D-to-3D kirigami structures is by assembling multiple copies of 2D-to-3D kirigami structures.

In Fig.~\ref{fig:SI_results_2Dto3D}\textbf{a}, we first show how we can create a 2D-to-3D kirigami structure that morphs from a square to a prescribed spherical cap. Here, we enforce both the surface matching and the boundary shape matching constraints in the deployed shape, thereby ensuring that the boundary points of the deployed kirigami structure lie on the target boundary. Then, as shown in Fig.~\ref{fig:SI_results_2Dto3D}\textbf{b}, by duplicating the design in both the contracted and deployed states, we can create a kirigami structure that morphs from a larger 2D square to $1/6$ of the sphere. This structure can then be further duplicated to form the cube-to-sphere example shown in the main text. Next, we consider approximating a torus shape. Since we implement optimization on both the initial pattern and the deployed pattern, we can easily control the shapes of both patterns. As shown in Fig.~\ref{fig:SI_results_2Dto3D}\textbf{c}, we can construct a 2D shape comprised of two rectangles connected by a trapezoid where the side edge has a slope of 1, and it can be deployed to be one patch of a torus, which is exactly one-eighth of the upper half of a torus. As long as the boundary condition is well satisfied, we can reflect this piece to get the whole torus as shown in the main text.

Note that the above-mentioned strategies can be naturally applied for creating kirigami structures with different resolutions. In Fig.~\ref{fig:cube-to-sphere-resolution}, we show multiple cube-to-sphere deployable structures with different resolutions, from which we can see that all of them satisfy the desired shape morphing effects.

\begin{figure*}[t]
    \centering
    \includegraphics[width=0.95\linewidth]{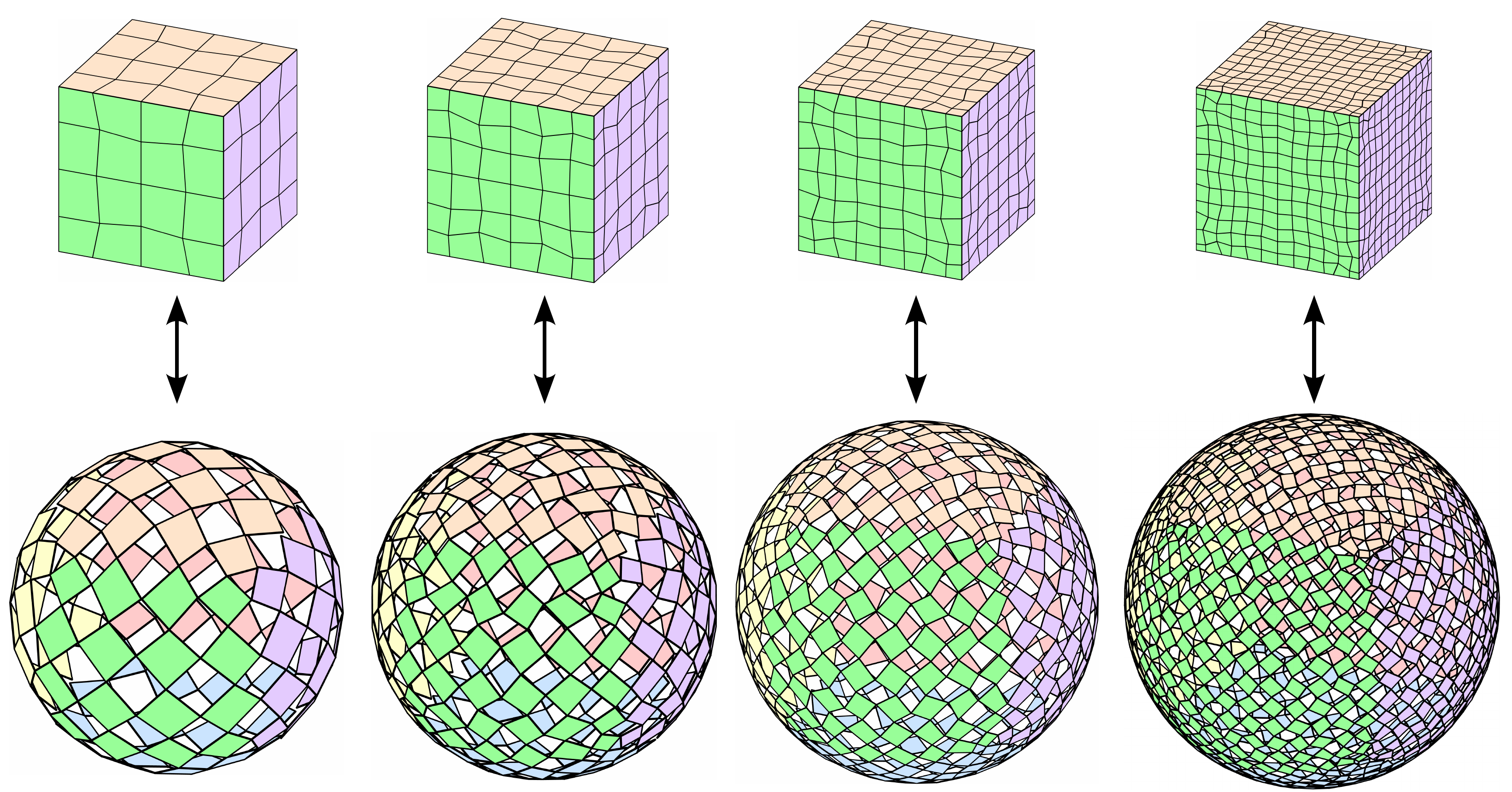}
    \caption{\textbf{Using our proposed method, we can easily achieve cube-to-sphere deployable structures with different resolutions.} Left to right: A structure with $16 \times 6 = 96$ facets, a structure with $36 \times 6 = 216$ facets, a structure with $64 \times 6 = 384$ facets, a structure with $144 \times 6 = 864$ facets.}
    \label{fig:cube-to-sphere-resolution}
\end{figure*}

In Fig.~\ref{fig:SI_results_3Dto3D}, we further show several examples of 3D-to-3D compact reconfigurable kirigami structures designed by our framework. As we can see, even with the same target shape for the first compact state (a spherical cap), we can obtain different kirigami structures achieving highly different target shapes in the second (reconfigured) compact state, including saddle, catenoid, vase, and partial torus. This demonstrates the great flexibility of our unified design framework.

\begin{figure*}[t!]
    \centering
    \includegraphics[width=0.95\linewidth]{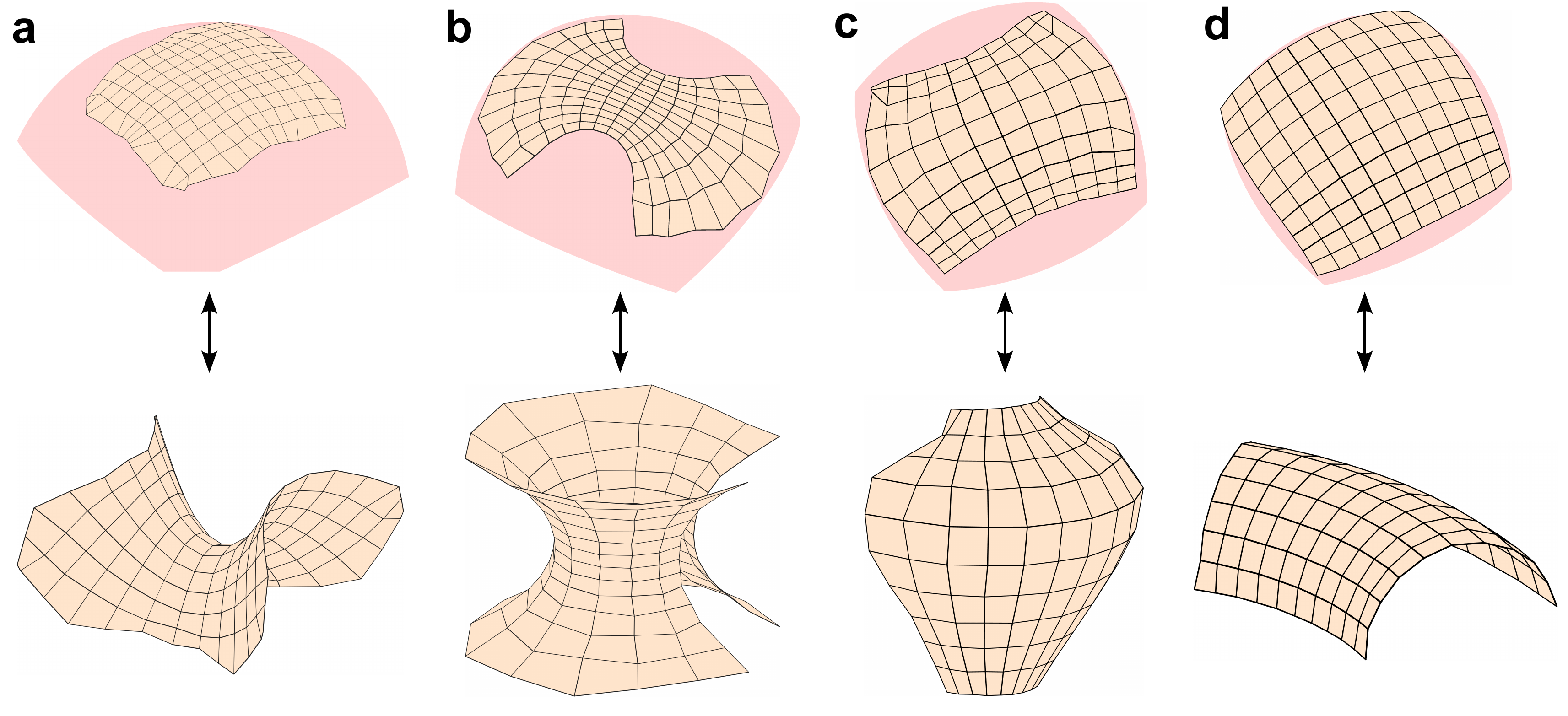}
    \caption{\textbf{3D-to-3D compact reconfigurable kirigami structures designed by our framework}. For each example, the first compact state approximates a spherical cap (top), and the reconfigured compact state (bottom) approximates another target shape: \textbf{a},~A saddle; \textbf{b},~A catenoid; \textbf{c},~A vase; \textbf{d},~A partial torus.}
    \label{fig:SI_results_3Dto3D}
\end{figure*}

\begin{figure*}[t!]
    \centering
    \includegraphics[width=0.95\linewidth]{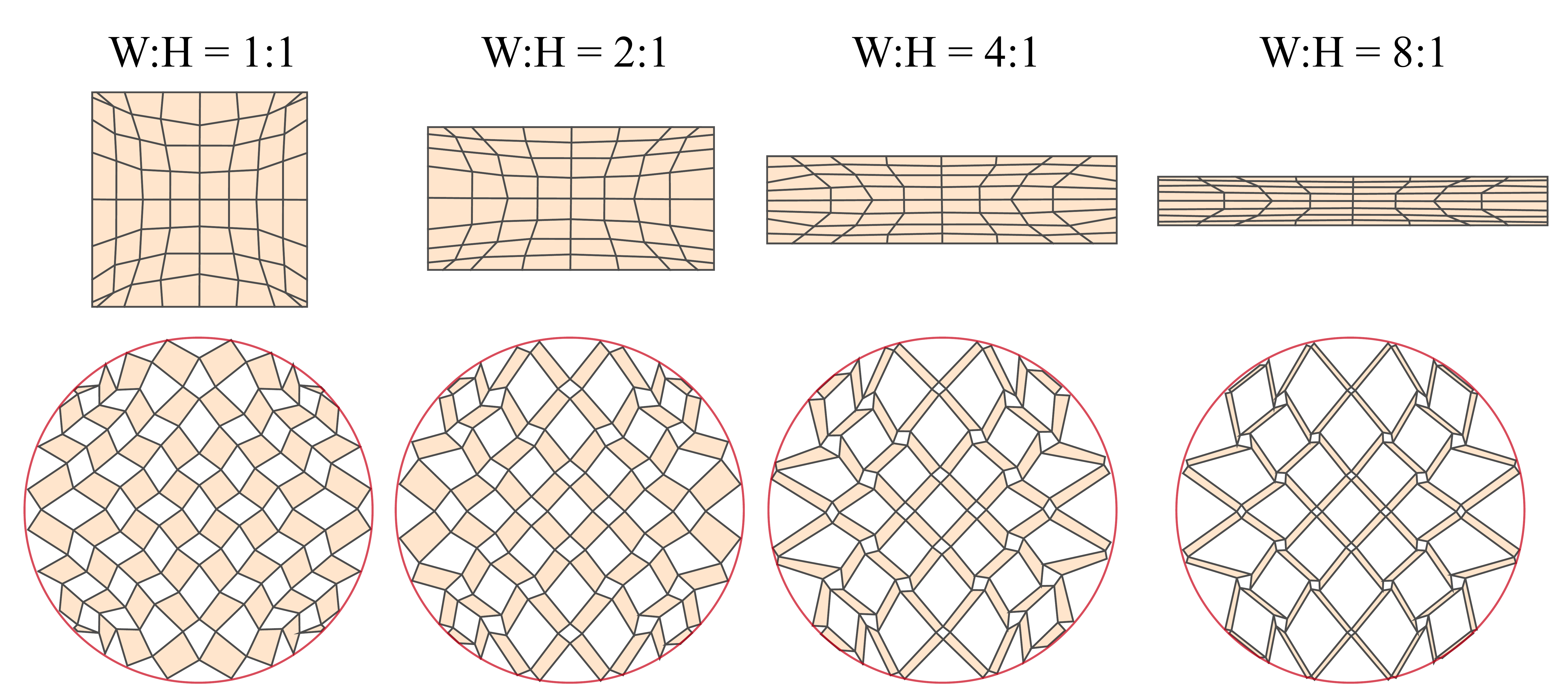}
    \caption{\textbf{Examples of rigidly deployable kirigami structures with a square/rectangle-to-circle shape change created using our framework.} Rectangle-to-circle kirigami structures designed by our framework with different prescribed rectangular aspect ratios (1:1, 2:1, 4:1, 8:1). Here, all patterns have the same resolution and the same target circle shape at the deployed state. }
    \label{fig:rec-to-square-with-ratio}
\end{figure*}

\subsection{Extreme shape and size changes}
Besides, note that kirigami structures are known for their ability to undergo large changes in shape and size during deployment. Using our proposed framework, we can easily design kirigami structures with controllable extreme changes in shape and size. In Fig.~\ref{fig:rec-to-square-with-ratio}, we create rigidly deployable kirigami patterns with different square-to-circle and rectangle-to-circle effects. Specifically, besides achieving a square-to-circle shape change, one can also achieve a rectangle-to-circle shape change with different prescribed width-to-height aspect ratios (2:1, 4:1, 8:1) for the rectangular compact state. 

Alternatively, we can also design rigidly deployable, compact reconfigurable kirigami patterns with different extreme target shapes. In Fig.~\ref{fig:targets}, we consider seven target shapes with different inertia ratios and design kirigami patterns that morph from a compact rectangular shape into these target shapes at the reconfigured compact state. In particular, the target shapes include circles and ellipses with different aspect ratios. It can be observed that our unified design framework is capable of achieving all these different target shapes.

Altogether, the above experiments demonstrate the great flexibility of our proposed framework in achieving various extreme changes in size and shape.

\begin{figure*}[t!]
    \centering
    \includegraphics[width=\linewidth]{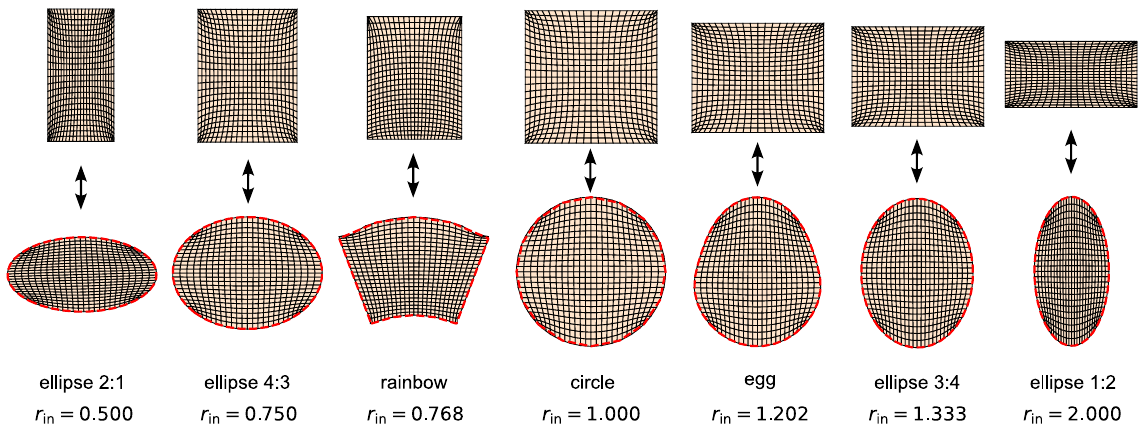}
    \caption{\textbf{2D rigidly deployable, compact reconfigurable kirigami structures obtained by our design framework with seven different target reconfigured shapes.} All structures have a resolution of $M \times M = 24 \times 24$ and have a rectangular shape at the first contracted state. The target shapes are ordered by the inertia ratio $r_{\mathrm{in}} = \sqrt{I_{xx}/I_{yy}}$ from $0.5$ to $2.0$. Red dashed lines represent the target boundaries.  }
    \label{fig:targets}
\end{figure*}

\section{More on the theoretical analysis}
\label{appendix:analysis}
\subsection{Cross-state centroid variances for the inertia-transposition law}
\label{appendix:free-verify}

For the free-to-circle, free-to-egg, and free-to-rainbow compact reconfigurable kirigami patterns shown in Fig.~\ref{fig:demo_multi_resolution_free}, Table~\ref{tab:free-verify} lists the area-weighted centroid variances $\rho^2_{k,y},\rho^2_{k,x}$ ($k=1,2$) of the two compact states and the cross-state residual $\rho^2_{1,y}\rho^2_{2,y}-\rho^2_{1,x}\rho^2_{2,x}$ of the balanced-fluctuation hypothesis (H3) in main text Theorem~\ref{thm:inertia-free}.  Both states of each pattern are rescaled by the common factor that makes the total area unit, so that all entries are scale-invariant and the residual is dimensionless.  Since the transposition in main text Eq.~\eqref{eq:tile-transpose} constrains only the intra-tile moments and not the centroid positions, $\rho^2_{1,y}$ need not equal $\rho^2_{2,x}$ as reported in Table~\ref{tab:free-verify}. For example, for the optimized $12\times 12$ free-to-rainbow pattern, we have $r_{\mathrm{in}}(\Omega_1)=1.2725$ and $r_{\mathrm{in}}(\Omega_2)=0.7679$ (product $0.977$).  With both states rescaled to unit area, the residual $\rho^2_{1,y}\rho^2_{2,y}-\rho^2_{1,x}\rho^2_{2,x}$ is dimensionless and of order $10^{-4}$ to $10^{-3}$ for the given resolutions.

\begin{figure*}[t]
    \centering
    \includegraphics[width=\linewidth]{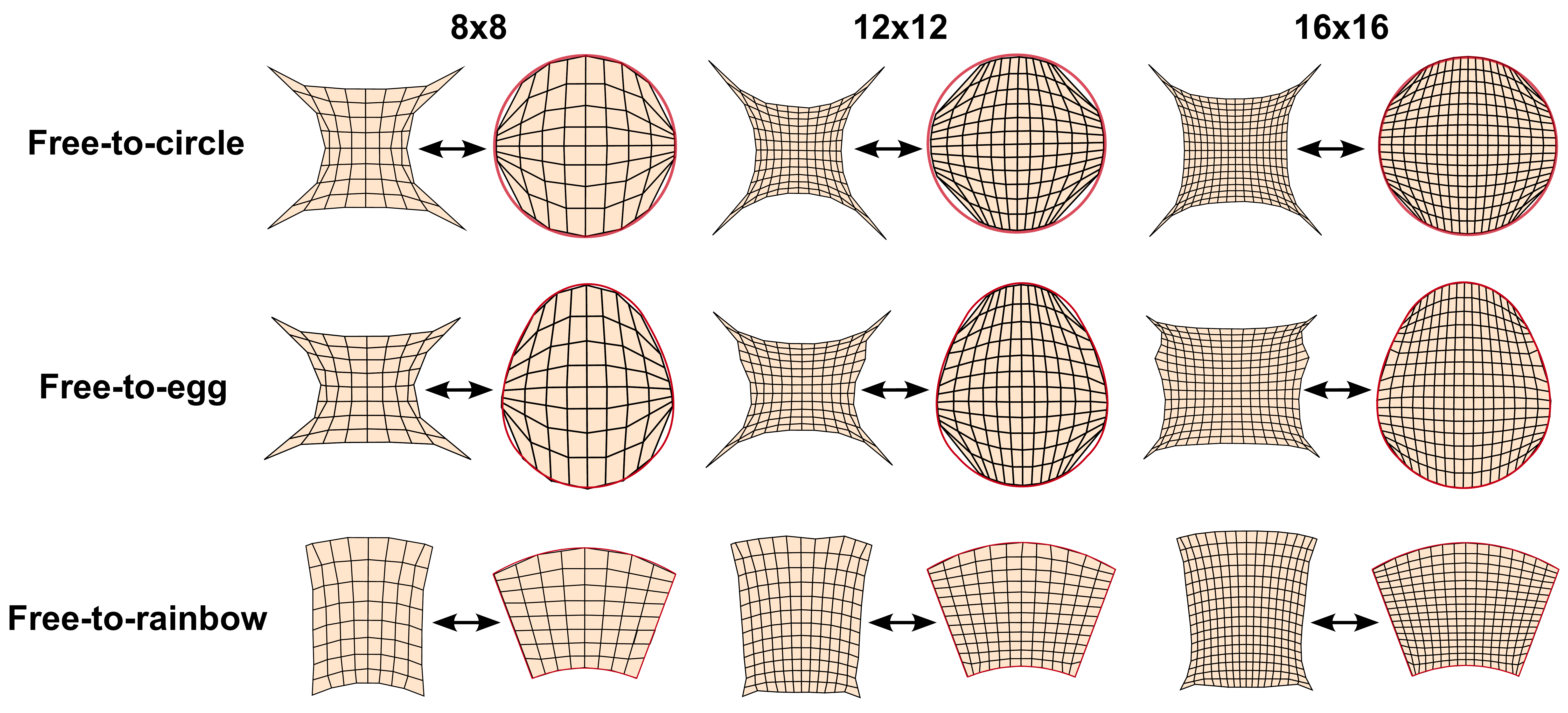}
    \caption{\textbf{Free-to-target compact reconfigurable kirigami patterns with different resolutions obtained by our framework.} For each target shape (circle, egg, rainbow), we require that the reconfigured contracted state matches the shape while the first contracted shape can be free. See also Table~\ref{tab:free-verify} for detailed analysis.}
    \label{fig:demo_multi_resolution_free}
\end{figure*}

\begin{table*}[ht]
\centering
\caption{\textbf{Area-weighted centroid variances $\rho^2_{k,y},\rho^2_{k,x}$ ($k=1,2$) and the cross-state residual
$\rho^2_{1,y}\rho^2_{2,y}-\rho^2_{1,x}\rho^2_{2,x}$ for the $M\times M$ free-to-circle, free-to-egg, and free-to-rainbow compact reconfigurable kirigami patterns obtained by our framework, with both states rescaled to unit total area.}  All entries are scale-invariant; the transposition in main text Eq.~\eqref{eq:tile-transpose} does not constrain the centroid positions, so $\rho^2_{1,y}$ generally differs from $\rho^2_{2,x}$.  The residual is the (H3) imbalance, dimensionless and of order $10^{-4}$ to $10^{-3}$ under given resolution; the last column reports the product $r_{\mathrm{in}}(\Omega_1)r_{\mathrm{in}}(\Omega_2)$, which is $1$ exactly when the two transposition ratios $T_y,T_x$ of main text Eq.~\eqref{eq:rin-product-exact} coincide. See also Fig.~\ref{fig:demo_multi_resolution_free} for a visualization of the patterns.}
\label{tab:free-verify}
\begin{tabular}{lccccccc}
\hline
Target shape & $M$ & $\rho^2_{1,y}$ & $\rho^2_{1,x}$ & $\rho^2_{2,y}$ & $\rho^2_{2,x}$ & residual & $r_{\mathrm{in}}(\Omega_1)r_{\mathrm{in}}(\Omega_2)$ \\
\hline
\multirow{3}{*}{Circle}  & 8  & 0.1209 & 0.1278 & 0.0795 & 0.0764 & -0.000148 & 0.9926 \\
                          & 12 & 0.1094 & 0.1110 & 0.0794 & 0.0784 & -0.000020 & 0.9989 \\
                          & 16 & 0.0956 & 0.0955 & 0.0793 & 0.0791 & 0.000029 & 1.0019 \\
\hline
\multirow{3}{*}{Egg}      & 8  & 0.0928 & 0.1121 & 0.0953 & 0.0640 & 0.001658 & 1.1079 \\
                          & 12 & 0.0814 & 0.1240 & 0.0954 & 0.0656 & -0.000366 & 0.9774 \\
                          & 16 & 0.0713 & 0.1043 & 0.0956 & 0.0661 & -0.000091 & 0.9934 \\
\hline
\multirow{3}{*}{Rainbow}  & 8  & 0.1101 & 0.0633 & 0.0621 & 0.1052 & 0.000172 & 1.0126 \\
                          & 12 & 0.1061 & 0.0655 & 0.0627 & 0.1063 & -0.000317 & 0.9772 \\
                          & 16 & 0.1105 & 0.0645 & 0.0629 & 0.1067 & 0.000073 & 1.0053 \\
\hline
\end{tabular}
\end{table*}

\begin{table*}[t]
\centering
\caption{\textbf{Decomposition of the H3 residual $E_y/W_C^2-E_x/H_C^2$ into the mass-profile, cross, row-mismatch and within-row terms of Eq.~\eqref{eq:H3-split-full} for different rectangle-to-target compact reconfigurable kirigami structures.} The first column lists the target shape, and the second column lists the value of $M$ for the kirigami pattern resolution $M\times M$. For each term, the $(y)$ and $(x)$ columns list the contribution to $E_y/W_C^2$ and $E_x/H_C^2$; the last column is their difference, equal to the sum of the four per-term differences.  All entries are in units of $10^{-3}$. See also Fig.~\ref{fig:rectangle_to_targets} for a visualization of the patterns.}
\label{tab:H3}
\begin{tabular}{lcccccccccc}
\hline
Shape & $M$ & mass$(y)$ & mass$(x)$ & cross$(y)$ & cross$(x)$ & mismatch$(y)$ & mismatch$(x)$ & within$(y)$ & within$(x)$ & residual \\
\hline
\multirow{3}{*}{Circle}      & 12 & $-6.0559$ & $-8.2201$ & $3.6369$ & $2.4002$ & $0.0688$ & $0.0362$ & $0.3499$ & $0.2256$ & $3.5578$ \\
                             & 16 & $-6.5815$ & $-7.7845$ & $3.3950$ & $2.7452$ & $0.0608$ & $0.0437$ & $0.3226$ & $0.2499$ & $1.9426$ \\
                             & 24 & $-6.9739$ & $-7.4988$ & $3.2667$ & $2.9949$ & $0.0567$ & $0.0495$ & $0.3028$ & $0.2705$ & $0.8362$ \\
\hline
\multirow{3}{*}{Egg}         & 12 & $-6.3840$ & $-8.2079$ & $3.6408$ & $3.0957$ & $0.0693$ & $0.0523$ & $0.3492$ & $0.3941$ & $2.3410$ \\
                             & 16 & $-6.8870$ & $-7.8099$ & $3.4281$ & $3.4470$ & $0.0621$ & $0.0620$ & $0.3234$ & $0.4201$ & $0.8075$ \\
                             & 24 & $-7.2787$ & $-7.5118$ & $3.2934$ & $3.7040$ & $0.0577$ & $0.0693$ & $0.3037$ & $0.4427$ & $-0.3282$ \\
\hline
\multirow{3}{*}{Rainbow}     & 12 & $-3.7998$ & $-4.0925$ & $1.5765$ & $2.2029$ & $0.0217$ & $0.0173$ & $0.3532$ & $1.3645$ & $-1.3406$ \\
                             & 16 & $-3.7392$ & $-4.1957$ & $1.6675$ & $2.1569$ & $0.0249$ & $0.0168$ & $0.3765$ & $1.3606$ & $-1.0088$ \\
                             & 24 & $-3.6524$ & $-4.3552$ & $1.7980$ & $2.0652$ & $0.0301$ & $0.0169$ & $0.3722$ & $1.3557$ & $-0.5348$ \\
\hline
\multirow{3}{*}{Ellipse 1:2} & 12 & $-6.1450$ & $-8.1198$ & $3.5760$ & $2.4447$ & $0.0668$ & $0.0373$ & $0.3465$ & $0.2393$ & $3.2428$ \\
                             & 16 & $-6.6067$ & $-7.7748$ & $3.3942$ & $2.7600$ & $0.0608$ & $0.0442$ & $0.3227$ & $0.2561$ & $1.8855$ \\
                             & 24 & $-6.9759$ & $-7.4971$ & $3.2659$ & $2.9954$ & $0.0566$ & $0.0495$ & $0.3031$ & $0.2709$ & $0.8310$ \\
\hline
\multirow{3}{*}{Ellipse 2:1} & 12 & $-5.9826$ & $-8.2847$ & $3.5895$ & $2.4401$ & $0.0687$ & $0.0365$ & $0.5059$ & $0.2698$ & $3.7198$ \\
                             & 16 & $-6.5635$ & $-7.8168$ & $3.3956$ & $2.7582$ & $0.0612$ & $0.0439$ & $0.3591$ & $0.2601$ & $2.0070$ \\
                             & 24 & $-6.9545$ & $-7.4831$ & $3.1263$ & $3.1009$ & $0.0545$ & $0.0514$ & $0.5162$ & $0.3399$ & $0.7336$ \\
\hline
\end{tabular}
\end{table*}

\subsection{The balanced-fluctuation condition for the
rectangular kirigami patterns}

Let $c^{D}_{ij}=\tfrac14(p_1+p_2+p_3+p_4)$ be the vertex-average center of the tile $(i,j)$ in the reconfigured compact state and $g^{D}_{ij}$ its area centroid, with area $a_{ij}$, and write $A=\sum_{ij}a_{ij}$, $\nu_{ij}=a_{ij}/A$, $p_j=\sum_i\nu_{ij}$, $q_i=\sum_j\nu_{ij}$.  The discrete transposition theorem (main text Theorem~\ref{thm:transpose}) controls the \emph{vertex-average} levels $Y_j=\frac{1}{M}\sum_i y(c^{D}_{ij})$: they form the arithmetic progression $Y_{j+1}-Y_j=W_C/M$ of main text Corollary~\ref{cor:AP} (numerically the arithmetic progression residual is $\sim10^{-9}$). The moment identity \eqref{eq:global-transpose} of Theorem~\ref{thm:inertia-free} (Step~1) in the main text, on the other hand, requires the \emph{area} centroid $g^{D}_{ij}$, so the row means entering $\rho_y^2$ are the area-weighted values $Y_j^{(a)}=\sum_i\nu_{i\mid j}\,y(g^{D}_{ij})$, which differ from the kinematic levels $Y_j$ by the row-level mismatch $t_j=Y_j^{(a)}-Y_j$. With the within-row variance $W_y=\sum_j p_j\sum_i\nu_{i\mid j}\,(y(g^{D}_{ij})-Y_j^{(a)})^2$, the split of the excess follows from a three-step variance identity, exact at every finite $M,N$:
\begin{enumerate}
\item \emph{Step 1 (Law of total variance).}
  $\rho_y^2=\sum_j p_j\bigl(Y_j^{(a)}-\bar y\bigr)^2+W_y$, where
  $\bar y=\sum_j p_j Y_j^{(a)}$.
\item \emph{Step 2 (Kinematic expansion).}  The arithmetic progression gives
  $Y_j=\mathrm{const}+\frac{W_C}{M}j$, hence
  $Y_j^{(a)}=\mathrm{const}+\frac{W_C}{M}j+t_j$; expanding the between-row
  variance yields
  \begin{equation}
  \begin{split}
  &\sum_j p_j\bigl(Y_j^{(a)}-\bar y\bigr)^2\\
  =&\frac{W_C^2}{M^2}\,\mathrm{Var}_p(J)
   +\frac{2W_C}{M}\,\mathrm{Cov}_p(J,t_j)
   +\mathrm{Var}_p(t_j),
  \end{split}
  \end{equation}
  where $\mathrm{Var}_p(J)=\sum_j p_j(j-\bar\jmath)^2$ with
  $\bar\jmath=\sum_j p_j j$,
  $\mathrm{Cov}_p(J,t_j)=\sum_j p_j(j-\bar\jmath)\,t_j$ (the term in
  $\bar t=\sum_j p_j t_j$ drops out of the cross term), and
  $\mathrm{Var}_p(t_j)=\sum_j p_j(t_j-\bar t)^2$.
\item \emph{Step 3 (Subtract the kinematic baseline).}  Under a uniform row
  mass the row-index variance is $(N^2-1)/12$, so that
  $V_Y=\frac{W_C^2}{M^2}\cdot\frac{N^2-1}{12}$ and
  $E_y=\rho_y^2-V_Y$.
\end{enumerate}
Assembling the three steps gives
\begin{equation}
\label{eq:H3-split-full}
\begin{aligned}
\frac{E_y}{W_C^2}
=
&\frac{1}{M^2}\Bigl[\mathrm{Var}_p(J)-\frac{N^2-1}{12}\Bigr]\\
&+\frac{2}{M W_C}\,\mathrm{Cov}_p(J,t_j)+\frac{1}{W_C^2}\,\mathrm{Var}_p(t_j)
+\frac{W_y}{W_C^2},
\end{aligned}
\end{equation}
with the conjugate $x$-identity obtained by $N\leftrightarrow M$, $W_C\leftrightarrow H_C$, $q_i$, $\bar\imath=\sum_i q_i i$.  The four terms are the mass-profile term, the cross term, the row-mismatch term and the within-row term: the first measures the deviation of the area-weighted row mass $p_j$ from uniformity, the second the correlation of the centroid mismatch $t_j$ with the row position, the third the fluctuation of $t_j$ itself, and the fourth the leftover within-row scatter.  The row-mismatch term is small ($\sim10^{-5}$) in all experiments but must be kept for the identity to close: with the convention above, the right-hand side equals the direct value $E_y/W_C^2=(\rho_y^2-V_Y)/W_C^2$ to machine precision.  Using the area centroid in place of the vertex average for $Y_j$ would break the arithmetic progression and leave a closure residual $\sim10^{-5}$.

Table~\ref{tab:H3} collects the decomposition for the circle, egg, rainbow, and two ellipse targets at several resolutions (see also Fig.~\ref{fig:rectangle_to_targets} for visualizations).  For each of the four terms of Eq.~\eqref{eq:H3-split-full}, the $(y)$ and $(x)$ columns give the corresponding contribution to $E_y/W_C^2$ and $E_x/H_C^2$; the last column is the residual $E_y/W_C^2-E_x/H_C^2$, equal to the sum of the four per-term differences (up to rounding).

\begin{figure*}[t!]
    \centering
    \includegraphics[width=\linewidth]{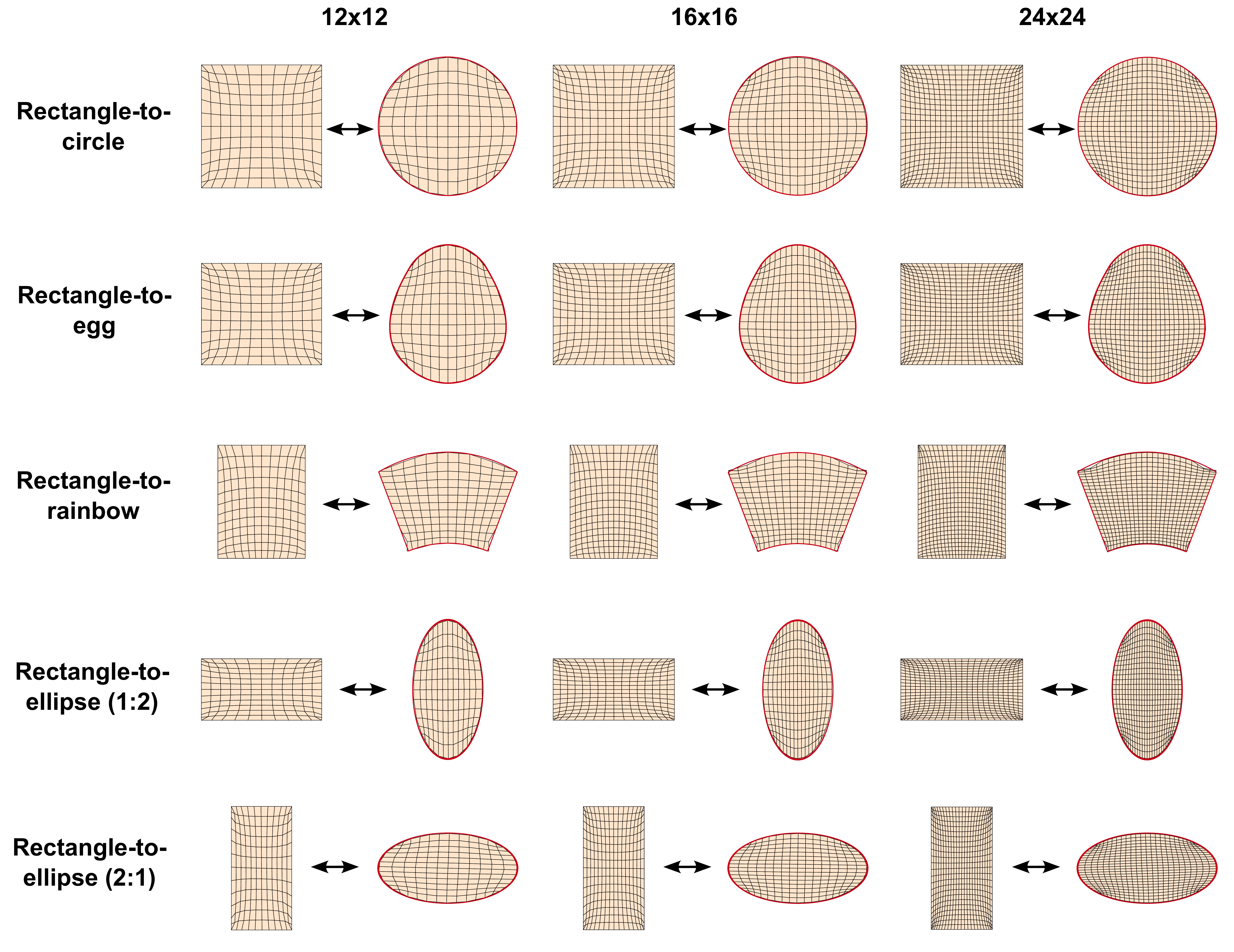}
    \caption{\textbf{Rectangle-to-target compact reconfigurable kirigami patterns with different resolutions obtained by our framework.} For each target shape (circle, egg, rainbow, ellipse 1:2, ellipse 2:1), we require that the reconfigured contracted state matches the shape, and the first contracted shape must also be a rectangle. See also Table~\ref{tab:H3} for detailed analysis.}
    \label{fig:rectangle_to_targets}
\end{figure*}

\subsection{The $M\times N$ correction}
\label{appendix:MN}

For $M\neq N$ the fluctuation correction in main text Corollary~\ref{cor:MN} is of order one rather than $10^{-4}$.  For the $20\times 6$ wavy pattern of Fig.~\ref{fig:demo_init}\textbf{d}, the data give $W_C/H_C=7.0717$, $(W_C/H_C)(N/M)^2=0.6365$, $r_{\mathrm{in}}=0.675$ (target), and $E_y/V_Y=+0.109$, $E_x/V_X=-0.0075$.  Inserting the reconfigured-moment inertia ratio $r_{\mathrm{in}}=0.6650$ into main text Eq.~\eqref{eq:MN-exact}, we have
\begin{equation}
0.6365 \;\approx\; 0.6650 \times 1.0129 \times 0.9461
      = 0.6373,
\end{equation}
where $1.0129=(N/M)\sqrt{(M^2-1)/(N^2-1)}$ is a finite-grid geometric factor and $0.9461=\sqrt{(1+E_x/V_X)/(1+E_y/V_Y)}$ is the fluctuation correction (the $\sim 0.1\%$ residual is the arithmetic progression/kinematic deviation of this short-column pattern; the difference between the target inertia $0.675$ and the reconfigured-moment value $0.6650$ is the finite-$M,N$ target approximation). The two factors account for the entire apparent gap between $(W_C/H_C)(N/M)^2$ and $r_{\mathrm{in}}$.  This example also illustrates the self-audit of Eq.~\eqref{eq:H3-split-full}: the $x$-direction arithmetic progression residual is $\sim10^{-3}$ and the closure grows correspondingly to $\sim10^{-3}$, flagging that the balanced-fluctuation assumption is not met for $M\gg N$.

\end{document}